\newtheorem*{theorem*}{Theorem}
\definecolor{mygreen}{RGB}{10,150,110}
\definecolor{myred}{RGB}{150,10,20}
\renewcommand{\epsilon}{\varepsilon}
\DeclareMathOperator{\E}{\ensuremath{\normalfont \textbf{E}}}
\newcommand{\ascomment}[1]{{\textcolor{purple}{[\textbf{Amin:} #1]}}}
\newcommand{\jccomment}[1]{{\textcolor{blue}{[\textbf{Jiale:} #1]}}}
\newcommand{\hiddencomment}[1]{}
\crefname{lemma}{Lemma}{Lemmas}
\crefname{theorem}{Theorem}{Theorems}
\crefname{property}{Property}{Properties}
\crefname{claim}{Claim}{Claims}
\crefname{result}{Result}{Results}
\crefname{definition}{Definition}{Definitions}
\crefname{observation}{Observation}{Observations}
\crefname{proposition}{Proposition}{Propositions}
\crefname{assumption}{Assumption}{Assumptions}
\crefname{invariant}{Invariant}{Invariants}
\crefname{line}{Line}{Lines}
\crefname{figure}{Figure}{Figures}
\crefname{equation}{}{}
\crefname{section}{Section}{Sections}
\crefname{appendix}{Appendix}{Appendices}
\crefname{algCounter}{Algorithm}{Algorithms}
\Crefname{algCounter}{Algorithm}{Algorithms}
\newtheorem{theorem}{Theorem}
\newtheorem{lemma}{Lemma}[section]
\newtheorem{claim}[lemma]{Claim}
\newtheorem{observation}[lemma]{Observation}
\newtheorem{remark}{Remark}
\newtheorem*{remark*}{Remark}
\newtheorem{invariant}[lemma]{Invariant}
\definecolor{mylightgray}{RGB}{230,230,230}
\algnewcommand{\IIf}[2]{\textbf{if} #1 \textbf{then} #2}
\algnewcommand{\EndIIf}{\unskip\ \algorithmicend\ \algorithmicif}
\newenvironment{graytbox}{
\par\addvspace{0.1cm}
\begin{tcolorbox}[width=\textwidth,
                  boxsep=5pt,
                  left=1pt,
                  right=1pt,
                  top=2pt,
                  bottom=2pt,
                  boxrule=0pt,
                  arc=0pt,
                  colback=mylightgray,
                  colframe=black,
                  ]
}{
\end{tcolorbox}
}
\newenvironment{whitetbox}{
\par\addvspace{0.1cm}
\begin{tcolorbox}[width=\textwidth,
                  boxsep=5pt,
                  left=1pt,
                  right=1pt,
                  top=2pt,
                  bottom=2pt,
                  boxrule=1pt,
                  arc=0pt,
                  colframe=black,
                  colback=white
                  ]
}{
\end{tcolorbox}
}
\newcounter{algCounter}
\renewcommand{\paragraph}{%
  \@startsection{paragraph}{4}%
  {\z@}{10pt}{-1em}%
  {\normalfont\normalsize\bfseries}%
}
\title{Stable Matching with Interviews}
\author{
 Itai Ashlagi\\{\em Stanford University} \and 
Jiale Chen\\{\em Stanford University} \and
 Mohammad Roghani\\{\em Stanford University} \and
 Amin Saberi\\{\em Stanford University}
 }
\date{}
\begin{document}
\maketitle
\begin{abstract}
    In several two-sided markets, including labor and dating, agents typically have limited information about their preferences prior to mutual interactions. This issue can result in matching frictions, as arising in the  labor market for medical residencies, where high application rates are followed by a large number of interviews.  Yet, the extensive  literature on two-sided matching primarily focuses on models where agents know their preferences, leaving the interactions necessary for preference discovery largely overlooked. 
    This paper studies this problem using an algorithmic approach, extending Gale-Shapley's deferred acceptance to this context.

    Two algorithms are proposed. The first is an adaptive algorithm that expands upon Gale-Shapley's deferred acceptance by incorporating interviews between applicants and positions. Similar to deferred acceptance, one side sequentially proposes to the other. However, the order of proposals is carefully chosen to ensure an interim stable matching is found. Furthermore, with high probability, the number of interviews conducted by each applicant or position is limited to $O(\log^2 n)$.

    In many seasonal markets, interactions occur more simultaneously, consisting of an initial interview phase followed by a clearing stage. We present a non-adaptive algorithm for generating a single stage set of  in tiered random markets. The algorithm finds an interim stable matching in such markets while assigning no more than $O(\log^3 n)$ interviews to each applicant or position. 
\end{abstract}
\clearpage
\vspace{1cm}
\setcounter{tocdepth}{2} 
\section{Introduction}

In many two-sided matching markets used for dating or allocating labor,  agents often interact prior to forming matches in order to gain information about their potential partners. These interactions, meant for learning preferences, can generate congestion, resulting in market inefficiencies.
For example, in the market for fellowships in sports medicine, on average more than 23 applicants interview for each slot \cite{doi:10.1177/15563316221103585}. Similarly, in dating markets, such as the popular platform Tinder, 
only a small fraction of online conversations culminate in an in-person meeting \cite{lefebvre2018swiping,grontvedt2020hook}.

This paper considers the problem of how to alleviate interview congestion and improve welfare in two-sided matching markets. We take an algorithmic approach to form interviews and find stable matches while minimizing congestion.  

We study this question in the classic two-sided marriage problem by \cite{gale1962college}. 
In the model, there are $n$ applicants and $n$ positions. Each applicant $a_i$ and position $p_j$ has a publicly known value of $u_i$ and $v_j$, respectively. The subjective interest of $a_i$ for matching with $p_j$, 
 denoted by $\epsilon_{ij}^A$ 
 is unknown unless the parties meet each other. The observed utility of $a_i$ in $p_j$ is $v_j + \epsilon^A_{ij}$ if $a_i$ and $p_j$ have met, and equal to $v_j$ otherwise. The distributions of $\epsilon_{ij}^A$s are known, they are all independent, and their expected value is zero. The subjective interest $\epsilon_{ji}^P$ and the observed utility of positions in the applicants are defined similarly.

A matching is interim-stable if (i) all the matched pairs have interviewed each other and (ii) there are no blocking pairs with respect to the observed utilities.  One simple approach for finding an interim stable matching is to ask all pairs to interview each other and then find a stable matching with respect to the observed utilities. That involves $n$ interviews per agent. On the other hand, even in the special case in which all the $u_i$s and $v_j$s are equal to each other, finding an interim-stable matching requires at least $\Omega(\log n)$ interviews per agent. 

The main result of the paper is to show that the number of interviews needed to find an interim-stable matching is much closer to the above lower-bound than the upper bound. In fact, it is possible to find an interim stable matching by assigning only a polylogarithmic number of interviews to each applicant or position. This is shown for both adaptive and non-adaptive models.


\subsection{Approach} 

\paragraph*{Adaptive algorithm:} We present an adaptive algorithm that finds an interim stable matching while requiring that each applicant and position engage in at most $O(\log^2n)$ interviews. The algorithm is applicable to a  general setting, where the publicly known values assigned to applicants and positions can vary arbitrarily, and the subjective interests are independent and identically distributed.

\begin{graytbox}
\begin{theorem*}[See \Cref{thm:query-complexity-two-epsilon}]\label{res:adaptive-algorithm}
    There exists an \textbf{adaptive} algorithm that finds an interim stable matching between $n$ positions and $n$ applicants such that with high probability every applicant and position participates in at most $O(\log^2 n)$ interviews.
\end{theorem*}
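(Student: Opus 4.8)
\textbf{Proof approach.} The plan is to design a deferred-acceptance-style algorithm in which preferences are discovered lazily and the proposal order is dictated by the public values. Sort the positions so that $v_1 \ge v_2 \ge \cdots \ge v_n$ and, symmetrically, think of applicants as ordered by $u_i$. Applicants propose to positions, but an applicant $a_i$ interviews a position $p_j$ only when $p_j$ has become a ``plausible candidate'': $p_j$ has one of the largest public values among positions not yet ruled out for $a_i$, and $p_j$ does not already reject $a_i$ on the basis of the public value $u_i$ alone (possible because an un-met position evaluates $a_i$ at utility exactly $u_i$, so if $a_i$'s public value falls far enough below that of $p_j$'s current tentative partner, no interview is needed to reject). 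Introducing a new proposer triggers the usual rejection chain, and a displaced applicant resumes its lazy search where it left off. Since an applicant is tentatively matched to a position only after an interview, condition (i) of interim-stability holds throughout by construction.

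For correctness (interim-stability of the final matching), the deferred-acceptance invariants give the absence of blocking pairs among pairs that have interviewed, exactly as in Gale--Shapley. The remaining worry is a blocking pair $(a_i,p_j)$ that never interviewed: this requires $v_j$ to exceed $a_i$'s final observed utility and $u_i$ to exceed $p_j$'s. I would rule this out with an invariant tied to the decreasing-public-value processing order — informally, once the algorithm has ``passed'' public-value level $t$, every position with $v_j \ge t$ holds a partner it weakly prefers to anyone it could still steal, and every applicant whose search has moved below level $t$ holds a match of observed utility at least $t$ minus the range of the $\epsilon$ values. Maintaining this across a rejection chain is bookkeeping in the style of the standard DA correctness proof, so I expect it to be routine once the invariant is stated correctly.

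The heart of the proof is the $O(\log^2 n)$ interview bound, and here I would use the principle of deferred decisions: $\epsilon^A_{ij}$ (resp.\ $\epsilon^P_{ji}$) is sampled only at the moment $a_i$ and $p_j$ interview, so every interview is a fresh independent trial given the history. I would then split the interviews charged to a fixed agent into two types: (a) interviews it performs during a single ``search'' for a partner, and (b) the number of distinct searches it undertakes, i.e., one plus the number of times it is displaced. For (a): conditioned on the history, each new interview in a search is an independent event that ``succeeds'' (yields a tentative match, or a public-value-only rejection that ends the search) with probability bounded below by a constant — or, at worst, by $1/\poly\log n$ once one accounts for how many competitive partners can already be saturated — so a Chernoff/coupon-collector estimate caps a single search at $O(\log n)$ interviews with probability $1 - n^{-\omega(1)}$. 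For (b): a newly introduced position of public value $v_k$ can only disturb positions whose public value lies within the range of the $\epsilon$ values of $v_k$ (stealing a partner requires that partner to strictly prefer the newcomer, impossible across a large public-value gap), and combined with a counting argument over the i.i.d.\ $\epsilon$'s — essentially a subcritical branching / balls-into-bins estimate showing the rejection chain does not snowball — this bounds the number of displacements of any agent by $O(\log n)$ w.h.p. Multiplying the two bounds and taking a union bound over the $2n$ agents gives the theorem; note the two factors of $\log n$ line up with the fact that the $\Omega(\log n)$ lower bound is already the cost of a single search.

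The main obstacle is step (b): showing rigorously, and with high probability rather than merely in expectation, that the adaptively triggered rejection chains stay short and do not feed back on themselves. The difficulty is that displacements are not independent — a chain can revisit the same band of the market repeatedly — so one must (i) expose the randomness in an order that keeps the deferred-decisions coupling valid, (ii) use the public-value tiering to confine each chain to positions of comparable public value, and (iii) find a potential function (something like the sum over positions of a truncated ``rank deficit'') that strictly decreases along chains and concentrates. Pinning down this potential and the accompanying concentration argument — and checking that the per-search bound in (a) composes with it without circularity — is the technical core; the rest is standard deferred-acceptance reasoning together with routine tail bounds.
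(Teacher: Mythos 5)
Your algorithm design and the stability argument track the paper's: a lazy deferred acceptance in which proposals follow the public-value order, interviews are granted only to applicants who could plausibly displace the current tentative partner, and the standard DA invariant rules out blocking pairs (the paper's \Cref{alg:two-epsilon} and the stability part of the proof of \Cref{thm:query-complexity-two-epsilon}). The deferred-decisions viewpoint is also the right one. The gap is in the quantitative core, in both of your factors.

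For your step (a), the per-interview ``success'' probability is not a constant, nor does a $1/\poly\log n$ rate give an $O(\log n)$ search length. To win a position, an applicant must have the largest subjective interest among all applicants that position could still end up with; the paper shows this competitor set has size $\Theta(\log n)$ (via \Cref{clm:few-upward-matches} and \Cref{clm:few-upward-matches2}, which in turn rest on \Cref{lem:few-upward-queries}: each position locks in a tentative match after at most $8\log n$ interviews because each interview has both noise terms non-negative with probability at least $1/4$, \Cref{clm:positive-query}). So the success probability per interviewed position is $\Theta(1/\log n)$, a single ``search'' already costs $\Theta(\log^2 n)$ interviews to succeed with polynomially small failure probability, and multiplying by any superconstant number of searches overshoots the target. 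For your step (b), confining rejection chains to a band of comparable public values gives nothing in the hardest admissible instance, where all $u_i$ and all $v_j$ are equal: there every position can disturb every other, and the ``subcritical branching'' estimate you invoke has no obvious rigorous form — you yourself flag this as the unresolved technical core. So the product decomposition (a)$\times$(b) is not a proof.

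The paper avoids both issues by not decomposing into searches at all. It shows each applicant $a_i$ interviews only positions in a window $[i-O(\log n),\, i+O(\log^2 n)]$: the $O(\log n)$ side is \Cref{lem:few-upward-queries}, and the $O(\log^2 n)$ side is \Cref{lem:few-downward-queries1}, which argues that over $\gamma=\Theta(\log^2 n)$ consecutive positions the applicant beats the size-$O(\log n)$ competitor set of each with probability $\Omega(1/\log n)$ independently, hence accumulates more than the $8\log n$ wins that could possibly be ``wasted'' on positions matched to lower-indexed applicants; a Chernoff bound then forces a match inside the window. The two logarithms thus come from (number of required wins)$\,\times\,$(inverse win probability), not from (search length)$\,\times\,$(number of searches). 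To repair your write-up you would need to replace step (b) entirely with a counting argument of the \Cref{clm:few-upward-matches2} type and fold step (a) into a single concentration bound over a window of positions.
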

\end{graytbox}

The proposed algorithm expands upon deferred acceptance \cite{gale1962college} by dividing the process into distinct interview and proposal stages. Specifically, in the applicant-optimal version, each applicant submits a proposal to the position offering the highest observed utility. The position will accept the ``highest'' proposal or assign an interview if the proposing applicant has the potential to create a blocking pair, taking into account potential changes in the observed value following the interview.

The main technical ingredient of the analysis is to show the probability that an applicant gets rejected by more than $O(\log^2 n)$ consecutive positions is very small. More precisely, with probability at least $1-n^{-3}$, each applicant $a_i$  is matched to position $p_j$ for $j \leq i + O(\log^2 n)$.


\paragraph*{Non-adaptive algorithm:} In the adaptive model, the algorithm is allowed to determine the sequence of interviews based on the results of previous ones. However, in practice, many markets favor a non-adaptive approach in which the interview lists are predetermined beforehand to allow the parties to streamline scheduling and the rest of the logistics. 


 We  present a non-adaptive algorithm for interim stable matching with a poly-logarithmic number of interviews per agent. The algorithm works in tiered markets in which applicants and positions are partitioned into multiple tiers. Two applicants $a_i$ and $a_{i'}$ in the same tier have the same publicly known value, i.e., $u_i = u_{i'}$. On the other hand, if an applicant $a_i$ is in a higher tier than $a_{i'}$, then it is always preferred independent of the interview outcomes, i.e., for all position $p_j$, $\Pr[u_i + \epsilon_{ji}^P > u_{i'} + \epsilon_{ji'}^P] = 1$. 

\begin{graytbox}
\begin{theorem*}[See \Cref{thm:non-adaptive-final}]\label{res:non-adaptive-algorithm}
    There exists a \textbf{non-adaptive} algorithm that, with high probability, finds an interim stable matching for \textbf{tiered markets}  with no more than $O(\log^3 n)$ interviews for each applicant or position.
\end{theorem*}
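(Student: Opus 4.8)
The plan is to build the non-adaptive interview lists tier by tier, using the adaptive algorithm's structural guarantee as a black box within each tier. Recall that within a single tier all applicants share the same public value $u_i$, so the subjective interests $\epsilon$ are the only source of ranking; this is precisely the ``all values equal'' regime in which the adaptive algorithm guarantees (with probability $1-n^{-3}$) that applicant $a_i$ is matched to a position whose index is within $O(\log^2 n)$ of $i$ in the proposal order. First I would fix, for each tier, a uniformly random ordering of its applicants and of its positions. The key observation is that the adaptive applicant-optimal run restricted to a tier only ever compares an applicant against positions that are ``nearby'' in this random order, so the set of interviews it would conduct is contained in a sliding window of width $O(\log^2 n)$ — but which window depends on the realized $\epsilon$'s. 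To make the list non-adaptive I would instead pre-commit every applicant to interview \emph{all} positions in a window of width $\Theta(\log^2 n \cdot \log n)=\Theta(\log^3 n)$ around its random position (and symmetrically for positions), so that with high probability the interviews the adaptive algorithm \emph{would have} requested are all already present.

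The second ingredient handles the interaction \emph{between} tiers. Because a higher-tier applicant is preferred to a lower-tier one by every position regardless of interview outcomes, the tiers can be cleared greedily from the top down: the top tier's positions are filled first, then the residual positions cascade to the next tier, and so on. The subtlety is that after the top tier is matched, the positions left over for tier two are not a ``fresh'' uniformly random set — they are the positions \emph{not} chosen by tier-one applicants. I would argue that conditioned on the tier-one outcome the leftover positions are still exchangeable within tier two's view (their $\epsilon$'s to tier-two applicants are untouched and independent), so re-randomizing the order of the residual positions when we lay down tier-two's lists restores uniformity. One then unions the per-agent interview lists across all the tiers an agent participates in; since each agent lives in exactly one applicant-tier and interviews into at most a constant number of position-tiers (its own, because of the cascade structure — a tier-two applicant is only ever considered for tier-two positions), the total per-agent interview count stays $O(\log^3 n)$.

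After the lists are fixed and all interviews in them are conducted, I would run ordinary Gale–Shapley on the observed utilities restricted to the interviewed pairs, and argue the result is interim stable: condition (i) is immediate since we only match interviewed pairs, and for condition (ii) I would show that any observed blocking pair $(a_i,p_j)$ must be intra-tier (cross-tier blocking pairs are ruled out by the cascade, since a position always prefers its own tier's match to a lower tier and a higher-tier position was already saturated), and then invoke the window-containment claim: the pair $(a_i,p_j)$ lies within each other's committed window with high probability, so it was interviewed, hence it cannot block a stable matching on the interviewed graph. Taking a union bound over the $\le n^2$ candidate pairs and the $\le n$ tier-clearing steps, each of which fails with probability $n^{-\Omega(1)}$, gives the overall high-probability guarantee.

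The main obstacle I anticipate is the conditioning argument across tiers: showing that the residual position set handed down to tier $t+1$ is ``random enough'' that the window-containment bound from the single-tier analysis still applies. The worry is that the adaptive matching of tier $t$ could correlate \emph{which} positions survive with their $\epsilon$-values toward tier $t+1$ in a way that clusters survivors and breaks the sliding-window structure. I expect this is handled by the independence of $\epsilon^{P}_{ji}$ across the two tiers together with a re-randomization step, but making it fully rigorous — ideally by coupling the tiered run to a sequence of independent single-tier instances on $\mathsf{Unif}[0,1]$ ranks — is where the real work lies; the window-width padding from $\log^2 n$ to $\log^3 n$ is exactly the slack one needs to absorb the union bound over tiers in that coupling.
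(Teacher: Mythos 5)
Your high-level plan (polylog-width interview windows, clearing tiers from the top down, then Gale--Shapley on the interviewed pairs) points in the right direction, but it is missing the one idea the paper's construction actually turns on, and the step you flag as "the main obstacle" is in fact where the proposal breaks. The re-randomization of the residual positions handed down to tier $t+1$ is not implementable non-adaptively: which positions survive tier $t$ depends on the realized $\epsilon$'s, so you cannot lay down tier-$(t+1)$'s lists as a function of the residual set. And even granting that the residual set is exchangeable, the paper's own Example 2 (one position tier versus $n$ singleton applicant tiers) shows that uncoordinated polylog-size interview sets fail: after the first $n-1$ applicants are cleared, exactly one position remains, and the probability that the last applicant's independently chosen $\delta$ interviews contain that specific position is only $\delta/m$. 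Windows "around its random position" do not help here because each singleton applicant tier would place its window independently; nothing forces consecutive applicants' windows to overlap, which is exactly what is needed so that the positions consumed by higher-priority applicants are already among the positions the next applicant has interviewed.

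The missing ingredient is a \emph{coordinated} window construction along a single fixed ordering of the long-side tier: applicant $a_i$ interviews the consecutive block of positions whose indices start where the previous applicant tiers' consumption left off (tracked non-adaptively via the "effective cardinality" $e(\cdot)$ in \Cref{alg:general-tiers-interview}), with width $\Theta(\log^3 n)$ so that consecutive blocks overlap by enough to absorb the uncertainty in which interviewed positions were taken. Two smaller inaccuracies: within a single tier on both sides no window is needed at all --- $\delta = \Theta(\log^2 n)$ uniformly random interviews suffice via Pittel's rank bound (\Cref{lem: worst-case-ranking}), which is the route the paper takes rather than invoking the adaptive algorithm as a black box; and the claim that each applicant "interviews into at most a constant number of position-tiers" is false in general (a large applicant tier may span many small position tiers), which is why the paper bounds the \emph{total} number of interviews an agent accumulates across all the iterations it survives, via the bound $C_r \leq \theta + 2\delta$. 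Finally, note that interviewing strictly more pairs than an adaptive run would changes the observed utilities, so "the adaptive algorithm would have matched within the window" does not by itself imply the stable matching on the larger interviewed graph stays inside the window; the paper instead shows directly that every short-side agent is matched to an interviewed partner with \emph{positive} subjective interest, which is what kills blocking pairs toward uninterviewed positions in the same tier.
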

\end{graytbox}

The design of the non-adaptive algorithm incorporates two key technical components. First, we observe that in markets where all applicants and positions are in a single tier, requiring each applicant to interview a set of $O(\log^2 n)$ positions chosen uniformly at random combined with an applicant-optimal Gale-Shapley algorithm results in an interim stable matching.  

On the other hand, when all positions are in the same tier while applicants are in individual tiers, random interviews prove to be ineffective. 
This is primarily due to positions exhibiting a preference for applicants in the corresponding higher tiers. Consequently, these higher-tier applicants are given higher priority in the matching process. The challenge arises for the lowest-tier applicant who, despite undergoing a poly-logarithmic number of random interviews, struggles to find the sole available position among the interviewed positions. 
To address this, a different approach is taken. By fixing the order of positions, interviews are assigned to each applicant with consecutive positions whose indices closely match the tier number of the applicant. This strategy ensures that applicants with similar tier numbers have a substantial overlap of interviews. Consequently, positions that are interviewed by higher-tier applicants will be matched to them and will not be available options for lower-tier applicants.

The non-adaptive algorithm combines the above two technical ingredients with a few other ideas. Specifically, it creates a sequence of bipartite graphs $G_i(A_i, P_i, E_i)$, where each $G_i$ may be a random graph like the first special case we considered, a sequential graph like the second special case, or a small complete bipartite graph when we have two small tiers. We refer the reader to \Cref{sec:nonadaptive} for the details of the algorithm and its analysis.

Further, in \Cref{sec:extension}, we demonstrate that if applicants and positions have individual preferences within the tiers prior to the interviews, the exact same ideas and proofs presented in \Cref{sec:nonadaptive} carry over, with some mild assumptions about the distribution of individual preferences before the interviews and subjective interest arising from the interview.

The extensive literature on two-sided matching has led to a rich theory and successful designs in practice. However, most of the existing work focuses on models in which agents know their own preferences. Little is known about the interaction period in which agents learn about their preferences by interacting with each other. This paper attempts to address this gap by looking at this problem with an algorithmic lens, focusing on extending Gale-Shapley's deferred acceptance to this setting.

\subsection{Related literature}

Congestion and frictions in the early stages of two-sided markets stem from the lack of complete information and competition. Several approaches have been proposed to improve the screening process, including signaling, limiting the number of applications, and even coordinating interviews.


Signaling allows applicants to send programs ``interest'' signals \cite{lee2015propose,coles2013preference, jagadeesan2018varying}, which intend to improve efficiency by helping in screening applications and interviewing applicants that are typically beyond reach. The literature has investigated how a limited number of signals indicating special interest can enhance efficiency. \cite{jagadeesan2018varying} demonstrate the impact of different signal quantities on the matching outcome.  Signaling is used in the Economics job market \cite{coles2010job} and experimented in residency and fellowship markets \cite{pletcher2022otolaryngology,salehi2019preference}.  Instead of signaling mechanisms, this paper proposes interview mechanisms that address interview congestion by eliminating interviews that are unlikely to form a match.

Another approach considered is to limit the number of applications applicants can send. \cite{agarwal2022stable,beyhaghi2021randomness,skancke2021welfare} demonstrate the benefit of limiting the number of applications when agents have complete information about their own preferences. We note that \cite{beyhaghi2021randomness} and \cite{skancke2021welfare} further assume that agents are ex-ante symmetric.\footnote{Limiting applications has also been proposed in for medical markets \cite{secrest2021limiting, carmody2021application,morgan2021case}.}

Several papers analyzed games induced by inviting agents for interviews 
 \cite{drummond2013elicitation,kadam2021interviewing}, demonstrating various frictions. Other CS papers study how to make interview decisions in the worst-case towards reaching a stable matching \cite{drummond2014preference,rastegari2013two}.

A  related approach for how to coordinate interviews is also studied. Specifically, \cite{manjunath2021interview} and \cite{allman2023interviewing} find that the match rate is high when agents have a similar number of interviews.  \cite{leeandschwarz} propose the idea of incorporating overlaps between positions. Our preference model allows for a more heterogeneous tiered market structure, which results in the non-adaptive algorithm assigning agents possibly an unequal number of interviews.  

Our paper is inspired by \cite{melcher2019may, melcher2019reducing} who propose to conduct an interview match that will limit the number of interviews.   
Similar to our paper, \cite{allman2023interviewing}  study non-adaptive interview mechanisms that generate many-to-many matching in large random markets with  interim stability in the limit.
In their model different applicants (positions) have heterogeneous pre-interview utilities of being matched to some position (applicant). Their heterogeneous utilities are a summation of common scores (generated from a continuous distribution)  and idiosyncratic scores (and in particular the market is not tiered). Instead, we consider arbitrary homogeneous pre-interview utilities (i.e., just common scores) and propose an adaptive algorithm generating an interim stable matching for any finite market size. \cite{allman2023interviewing}  further points to the challenge of handling tiered markets.\footnote{They offer a heuristic for two-tiered markets; this heuristic adds to higher tier agents, safety interviews with lower tier agents.} Instead our non-adaptive algorithm generates an interim-stable matching in every homogeneous tiered market. 





This paper  relies on the existing literature on random two-sided matching markets. Our non-adaptive algorithm incorporates the findings of \cite{ashlagi2017unbalanced} and \cite{CaiThomas22} as a fundamental component. These prior studies specifically examine the average ranking of the matched positions for individual applicants in the applicant-proposing Gale-Shapley algorithm when preference lists are randomized. We further extend the tiered market model in \cite{ashlagi2020clearing} to allow for noisy preferences. \cite{ashlagi2020clearing} focuses on minimizing communication to reach an exact stable matching with high probability in a random market when agents know their own preferences.  Instead,  agents in our model know the public values and have perfect information only over public values and we seek to minimize the number of interviews required to learn about post-interview preferences in order to reach interim stability. 

The concurrent work  \cite{Ashlagi2024interview} extends our framework in two notable ways. First, it examines markets with {\em heterogeneous} pre-interview utilities, addressing scenarios where agents have differing pre-interview preferences. Second, it explores simple decentralized signaling mechanisms to determine which pairs should interview each other. 

The authors demonstrate that in a single-tiered random market with sparse signals ($d = \omega(1)$), almost interim stability can be achieved. Specifically, the matching becomes perfectly interim stable after removing a vanishingly small fraction of agents. Furthermore, in the case of dense signals ($d = \Omega(\log^2 n / p)$), perfect interim stability can be attained in imbalanced markets through short-side signaling.  Additionally, they extend their results to multi-tiered markets and identify conditions under which signaling mechanisms are incentive compatible.

Finally, there is a growing literature in Economics that studies the structure stable matchings when agents have partial knowledge of their own preferences Liu et al. \cite{liu2014stable}, Liu \cite{liu2020stability}. This literature does not consider the matching dynamics that arise prior to being matched. An exception is \cite{echenique2022top}, which contributes to our understanding of the stages that occur prior to a match.

\subsection{Our Model}

Let $A = \{a_1, a_2, \ldots, a_n\}$ and $P = \{p_1, p_2, \ldots, p_n\}$ denote the set of applicants and positions respectively.  The utility of applicant $a_i$ for position $p_j$, $v_{ij}$ can be written as  $v_j + \epsilon^A_{ij}$. The quantity $v_j$ reflects the characteristics of the position, like working hours, salary, or prestige, and is public knowledge.  Applicant $a_i$'s subjective interest in position $p_j$ is captured by $\epsilon^A_{ij}$, which will only be revealed to the applicant after the interview. We assume  $\epsilon^A_{ij}$'s are independently sampled from a known symmetric distribution with mean zero. Define the utility of position $j$ in applicant $i$, $u_{ij} = u_i + \epsilon^P_{ji}$ similarly. Note that $\epsilon^A_{ij}$'s and $\epsilon^P_{ji}$'s are also independent from each other. 

We  define interim stable matchings.  Let the {\em observed utility} of $a_i$ in $p_j$, $v^o_{ij}$ be $v_j + \epsilon^A_{ij}$ if $a_i$ and $p_j$ have interviewed, and equal to $v_j$ otherwise. Define $u^o_{ji}$'s or the observed utility of positions in the applicants similarly. A matching $\mu$ is interim stable if all the matched pairs have interviewed with each other and there are no blocking pairs with respect to observed utilities. In other words,  $\{a_i, p_j\} \in \mu$ and $\{a_k, p_l\} \in \mu$ imply either $v^o_{ij} \geq v^o_{il}$ or $u^o_{ji} \geq u^o_{jk}$. Throughout the paper, we assume the applicants and positions prefer being matched to someone to remaining unmatched.

For ease of notation, we use $\mu(a_i)$ and $\mu(p_j)$ to refer to the position or the applicant they are matched to, respectively.  Define $\mu(a_i) = \emptyset$ if $a_i$ is not matched. We also use $\succ_{a}$ and $\succ_{p}$ to denote the preferences of applicants and positions, e.g., $p_j  \succ_{a_i} p_{j'}$ if and only if $v_{ij}^o > v_{ij'}^o$.

We will study {\em adaptive} and {\em non-adaptive} algorithms for finding interim stable matching. Adaptive algorithms use the outcome of previous interviews to propose the next one. In contrast, non-adaptive algorithms determine the interviews that should be conducted between all the applicants and agents in one shot. 


\section{An Adaptive Algorithm}

In this section, we present an adaptive algorithm for finding an interim stable matching. Our algorithm extends Gale-Shapley's deferred acceptance to incorporate interviews between the applicants and positions. Just like in deferred acceptance, one side makes proposals to the other sequentially, but the order of proposals is chosen carefully so that with high probability, the number of  interviews needed to obtain interim stability is no more than $O(n \log^2 n)$.

\begin{theorem}\label{thm:query-complexity-two-epsilon}
    \Cref{alg:two-epsilon} finds an interim stable matching between $n$ positions and $n$ applicants. Moreover, with high probability, the number of interviews done by each applicant or position is at most $O(\log^2 n)$.
\end{theorem}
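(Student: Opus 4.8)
The plan is to first establish correctness --- that \Cref{alg:two-epsilon} outputs an interim stable matching --- and then bound the number of interviews. Correctness should follow the usual deferred-acceptance template: since the algorithm is a variant of Gale--Shapley where applicants propose and positions tentatively hold their best offer, I would argue that (i) every matched pair has interviewed (this is forced by the algorithm, which only lets a position accept an applicant it has interviewed, or interviews before accepting whenever a post-interview value could create a blocking pair), and (ii) there is no blocking pair with respect to observed utilities, by the standard argument that if $a_i$ prefers $p_j$ to its assigned position, then $a_i$ must have proposed to $p_j$ and been rejected in favor of someone $p_j$ weakly prefers, and $p_j$'s held applicant only improves over time. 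The subtle point unique to this setting is that a position may reject an applicant \emph{without} interviewing it, so I need to check that such a rejection is ``safe'': it only happens when the position already holds an applicant whose observed utility exceeds the best possible post-interview utility of the proposer, so no blocking pair can arise. I would state this as an invariant maintained throughout the run.

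The heart of the proof is the interview bound. The key claim, already flagged in the excerpt, is that with probability at least $1 - n^{-3}$ each applicant $a_i$ ends up matched to some $p_j$ with $j \le i + O(\log^2 n)$ (after reindexing positions by decreasing public value $v_j$, and applicants by decreasing $u_i$). I would prove this by a potential/displacement argument along the chain of proposals: track, for each applicant, how far down the position ordering it has been pushed. When $a_i$ proposes to $p_j$ and is rejected, either $p_j$ holds/accepts a higher-public-value-plus-noise applicant, or an interview reveals the proposer's subjective interest is too low. Because the $\epsilon$'s are i.i.d. from a fixed symmetric mean-zero distribution, the probability that $a_i$ loses a given ``competition'' at a position of comparable public value is bounded by a constant $<1$ (a lower-tier or lower-noise applicant wins only with bounded probability), so the number of consecutive rejections stochastically dominates a geometric-type random variable; a rejection cascade of length $L$ has probability at most $c^{L}$ for some $c<1$. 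Setting $L = \Theta(\log^2 n)$ and union bounding over all $n$ applicants and all $O(n)$ proposal events gives the $1 - n^{-3}$ bound. Since each applicant sits at each position at most once and only interviews a position when it actually proposes there, the displacement bound of $O(\log^2 n)$ translates directly into an $O(\log^2 n)$ bound on interviews per applicant; symmetry of the construction (each position is ``visited'' only by applicants whose public value lies in a window of size $O(\log^2 n)$) gives the same bound per position.

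I expect the main obstacle to be making the ``rejection cascade'' bound rigorous, because the events along the cascade are \emph{not} independent: the identity of the applicant $a_i$ is competing against at each step depends on the history, and conditioning on ``$a_i$ has been rejected $k$ times'' reshapes the conditional law of the relevant $\epsilon$'s. The clean way around this, which I would adopt, is to reveal the randomness lazily and couple the process to an idealized one: argue that conditioned on any history consistent with $a_i$ having been displaced to position index $j$, the conditional probability that $a_i$ is rejected at its \emph{next} proposal is at most some constant $c<1$ that does not depend on the history --- this requires showing that among the applicants $p_j$ could be holding, all but $O(\log^2 n)$ of them have public value low enough that $a_i$ beats them regardless of noise, and for the $O(\log^2 n)$ ``dangerous'' comparable applicants the noise comparison is a fair-ish coin. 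Then a stopping-time / Azuma-type argument (or a direct supermartingale on the displacement) closes it. A secondary technical point is handling the boundary (the last $O(\log^2 n)$ positions/applicants, where the window argument degenerates) and confirming that the algorithm terminates with a perfect matching; both are routine once the cascade bound is in hand. I would also double-check that the two stages (interview stage and proposal stage inside each step) don't double-count interviews --- each applicant--position pair interviews at most once, triggered the first (and only) time that applicant proposes to that position.
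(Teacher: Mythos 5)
Your correctness argument is essentially the paper's: the standard deferred-acceptance contradiction, plus the observation that an interview-free rejection is safe because it is made with respect to observed utilities (which for an un-interviewed pair never change) and a position's held match only improves. That part is fine. Your overall architecture for the interview bound --- a per-applicant displacement bound of $O(\log^2 n)$ followed by a union bound, with randomness revealed lazily --- also matches the paper. The gap is in the core probabilistic claim that drives the displacement bound.

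You assert that the probability $a_i$ is rejected at its next proposal is a history-independent constant $c<1$, so a cascade of length $L$ has probability $c^L$. If that were true, $L=\Theta(\log n)$ would already give probability $n^{-3}$ and you would be proving an $O(\log n)$ bound; the constant-$c$ mechanism cannot be what yields $\log^2 n$. Your fallback --- that all but $O(\log^2 n)$ competitors ``have public value low enough that $a_i$ beats them regardless of noise'' --- is not available in this model: the $u_i$'s are arbitrary and the noise is i.i.d.\ from an arbitrary symmetric mean-zero distribution, so no competitor is dominated deterministically. What actually makes the argument work is a \emph{counting} argument about the dynamics, in two steps. First (the ``upward'' direction), each interview independently produces a tentative match with probability at least $1/4$ (both noise terms non-negative), so each position rejects at most $8\log n$ interviewed applicants before being matched; consequently positions fill essentially in index order and no position interviews an applicant more than $8\log n$ indices below it. Second (the ``downward'' direction), when applicant $a_i$ reaches position $p_l$ with $i\le l\le i+\Theta(\log^2 n)$, almost every applicant in the index window $[i,\,l+8\log n]$ has already been absorbed by the positions $p_i,\dots,p_{l-1}$, leaving a competitor pool of size only $O(\log n)$; hence $a_i$ is $p_l$'s favorite (with non-negative noise) with probability $\Omega(1/\log n)$, these events are independent across $l$ because each position draws fresh noise, and a Chernoff bound over $\Theta(\log^2 n)$ positions yields $\Omega(\log n)$ such ``wins,'' which exceeds the $O(\log n)$ positions that can be occupied by applicants of index below $i$. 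The factor $\log^2 n = \log n\times\log n$ is exactly (pool size) $\times$ (number of wins needed), and establishing the $O(\log n)$ pool size requires the combinatorial claims bounding cross-matches between $\{i<k\}$ and $\{j\ge k\}$, which your sketch does not supply. Without that ingredient the supermartingale/Azuma framing has nothing quantitative to run on.
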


Let us start with an informal overview of the algorithm. Consider the applicant-proposing deferred acceptance mechanism. In the beginning, $p_1$ has the largest expected utility and is, therefore, the first choice for all applicants. Similarly, since $a_1$ is the applicant with the largest expected utility, it is $p_1$'s first choice. The algorithm asks $p_1$ and $a_1$ to interview each other. The interview may change $a_1$ and $p_1$'s  preferences. If they are still at the top of each other's preference list, we can match them and reject all other applicants for $p_1$; otherwise, we update the preference list of $p_1$ and $a_1$ and continue the process. 

For an unmatched applicant $a$,  let $\beta(a)$ be the most preferred position from which the applicant is not yet rejected. At any time, $\beta(a)$ is the position to which applicant $a$ is going to propose next. At each step, we consider the  position $p_j$ with the smallest $j$ such that there exists an applicant $a_i$ where $\beta(a_i) = p_j$.



Suppose that among all proposals that $p_j$ is receiving, $a_i$ is the one that has the largest observed utility, i.e. $i = \arg \max_i u^o_{ij}$. If $p_j$ and $a_i$ have already interviewed each other, then the process is similar to deferred acceptance: if $\mu(p_j) \succ_{p_j} a_i $, position $p_j$ rejects applicant $a_i$; otherwise, 
 $p_j$ rejects $\mu(p_j)$ and matches to  $a_i$. 

Now consider the case where  $p_j$ has not interviewed $a_i$. If $\mu(p_{j}) \succ_{p_j} a_i$ and $j < i$,  $p_j$ rejects $a_i$ without an interview. Otherwise, $p_j$ and $a_i$ interview each other and update their observed utility and preference list. The algorithm ends when all agents are matched. See a formal description below.

\begin{algorithm}[H]
    \caption{Adaptive Algorithm for Interim Stable Matching} 
    \label{alg:two-epsilon}
        Initialize $\mu(a) = \emptyset$ for $a \in A \cup P$, and $\forall i,j, v^o_{ij}=v_j, u^o_{ji} = u_i$.
        
        \While{$\exists$ an unmatched applicant}{            
            Let $j^*$ be the smallest index where $\beta(a_i)=p_{j^*}$, for some unmatched applicant $a_i$.
            
            Let $a_{i^*}$ be position $p_{j^*}$'s favorite applicant from the set $\{a_i | \beta(a_i) = p_{j^*} , \mu(a_i) = \emptyset\}$.

                \If{$(a_{i^*}, p_{j^*})$ have not interviewed and $(i^* \leq j^* \quad$ or $\quad a_{i^*} \succ_{p_{j^*}} \mu(p_{j^*}))$ \label{ln:two-epsilon-line-12}}{
                    Position $p_{j^*}$ interviews applicant $a_{i^*}$; update $u^o_{i^* j^*},v^o_{j^* i^*}$. \label{ln:two-epsilon-line-15}

                }
\Else{
                \lIf*{$\mu(p_{j^*})\succ_{p_{j^*}} a_{i^*}$}{
                     $p_{j^*}$ rejects applicant $a_{i^*}$.\\
                }
                \Else{
                
                     $p_{j^*}$ rejects applicant $\mu(p_{j^*})$ if it is not $\emptyset$.
                    
                    $\mu(a_{i^*}) \gets p_{j^*}$ and $\mu(p_{j^*}) \gets a_{i^*}$. \label{ln:matched-if-interivew}
                    
                }             
        }
        }
        
    \Return matching $\mu$.
\end{algorithm}

\subsection{Analysis of \cref{alg:two-epsilon}}


We will prove \cref{thm:query-complexity-two-epsilon} in the rest of this section. We will show that in the course of the algorithm, all positions interview candidates that have a fairly similar global ranking.  More specifically,  with high probability, every position $p_j$ only interviews applicants $a_i$ where $j - O(\log^2 n) \leq i \leq j + O(\log n)$. 

We  start with a few simple observations about the algorithm. First, note that similar to the Gale-Shapley algorithm, when a position $p_j$ gets matched to an applicant $a_i$, it remains matched until the end of the algorithm. Further, $p_j$ may only reject $a_i$ to match with a more preferable applicant. 

\begin{claim}\label{clm:positive-query}

Suppose that unmatched position $p_j$ and applicant $a_i$ interview each other at some point during the algorithm and both observe non-negative $\epsilon^A_{ij}$ and $\epsilon^P_{ji}$. Then, the algorithm (tentatively) matches them to each other. Subsequently,  $p_j$ may only interview applicants $a_{i'}$ for which $i' < \max(i, j + 1)$.

\end{claim}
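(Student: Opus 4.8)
The plan is to prove the two assertions of the claim in sequence, leaning on the structure of the algorithm and the ordering by index.

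\textbf{Step 1: They get matched.} Consider the moment $p_j$ and $a_i$ interview each other; by the algorithm, this happens because $a_i$ is $p_j$'s favorite among the currently proposing unmatched applicants and the condition on \Cref{ln:two-epsilon-line-12} is met. After the interview, the observed utilities become $u^o_{ij} = u_i + \epsilon^P_{ji} \geq u_i$ and $v^o_{ji} = v_j + \epsilon^A_{ij} \geq v_j$ since both $\epsilon$'s are non-negative. Now I need to argue that $a_i$ re-proposes to $p_j$ and is accepted. First, $a_i$ was unmatched and just interviewed $p_j$, so $p_j$ is still on $a_i$'s list (it has not been rejected by $p_j$), and since $v^o_{ji} \geq v_j$, position $p_j$ is now at least as attractive to $a_i$ as its prior value — so $\beta(a_i)$ either is $p_j$ or something $a_i$ prefers even more; in the latter case $a_i$ proposes elsewhere first, but the key point is that whenever $a_i$'s proposal reaches $p_j$ again, $p_j$ will not reject it via an interview-less rejection (they have now interviewed) and will not reject it unless it holds a strictly better applicant. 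I should argue $p_j$ has no reason to reject $a_i$: at the time of the interview $a_i$ was the best proposer, and $p_j$'s held applicant $\mu(p_j)$ (if any) satisfied $a_i \succ_{p_j} \mu(p_j)$ or $i^* \le j^*$; combined with $u^o_{ij}\ge u_i$ pushing $a_i$ up in $p_j$'s ranking, $p_j$ (tentatively) matches to $a_i$. The cleanest route is: the interview only (weakly) raises both parties' mutual observed utilities, so after the interview the pair $(a_i,p_j)$ is mutually at-least-as-attractive as before, and since $a_i$ was about to propose to $p_j$ with $p_j$ being the one that triggered the interview, the step concludes with $\mu(a_i)\gets p_j$, $\mu(p_j)\gets a_i$ — either immediately (if the \textbf{else} branch is reached in this same iteration after a subsequent re-proposal) or in a later iteration, using monotonicity of the deferred-acceptance dynamics (positions only improve).

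\textbf{Step 2: Future interviews of $p_j$ have small index.} Once $p_j$ is matched to $a_i$, it stays matched and only ever swaps to a strictly more preferred applicant (the standard Gale-Shapley monotonicity observation already noted in the excerpt). Suppose later $p_j$ interviews some applicant $a_{i'}$. By \Cref{ln:two-epsilon-line-12}, an interview between $p_{j'} = p_j$ and $a_{i'}$ is triggered only when $i' \le j$ or $a_{i'} \succ_{p_j} \mu(p_j)$. If $i' \le j$ then certainly $i' < j+1 \le \max(i, j+1)$, done. Otherwise $a_{i'} \succ_{p_j} \mu(p_j)$; but $\mu(p_j)$ at that time is an applicant $p_j$ prefers at least as much as $a_i$ (since $p_j$'s match only improves from $a_i$ onward), so $a_{i'} \succ_{p_j} a_i$ in observed utility. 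I want to conclude $i' < i$ from this. Here is where I must be careful: $a_{i'} \succ_{p_j} a_i$ means $u^o_{i'j} > u^o_{ij}$. We have $u^o_{ij} = u_i + \epsilon^P_{ji} \ge u_i$. If $a_{i'}$ has interviewed $p_j$ at that point, $u^o_{i'j} = u_{i'} + \epsilon^P_{ji'}$; if not, $u^o_{i'j} = u_{i'}$. In the not-yet-interviewed subcase we'd get $u_{i'} > u_i + \epsilon^P_{ji} \ge u_i$, hence $u_{i'} > u_i$, hence $i' < i$ (indices are sorted by public value, so smaller index = larger public value — I should double check the paper's convention; the informal overview says $p_1, a_1$ have the largest expected utility, confirming smaller index $\Leftrightarrow$ larger public value). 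So $i' < i \le \max(i,j+1)$, done. The subcase where $a_{i'}$ has already interviewed $p_j$ before this triggering needs the observation that a position never interviews the same applicant twice in a way that matters — actually if they already interviewed, the \textbf{if} on \Cref{ln:two-epsilon-line-12} fails (``have not interviewed''), so no new interview is triggered; thus the only relevant case is $a_{i'}$ not having interviewed $p_j$ yet, and the argument above closes it.

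\textbf{Main obstacle.} The delicate point is Step 1 — pinning down exactly \emph{when} and \emph{why} the tentative match happens, since the interview and the matching are separate iterations of the while-loop and other proposals may intervene. I expect to handle this by a short monotonicity argument: after the interview, $p_j$ weakly prefers $a_i$ to its pre-interview assessment and $a_i$ weakly prefers $p_j$; since $a_i$ is unmatched and $p_j$ is (at the interview moment) either unmatched or holding an applicant dominated by $a_i$, standard deferred-acceptance reasoning gives that $a_i$ eventually (in fact at its next proposal to $p_j$) is accepted, and $p_j$'s match only improves thereafter — which is also exactly what powers Step 2. The index bookkeeping in Step 2 is then routine given the sorted-by-public-value convention and the triggering condition on \Cref{ln:two-epsilon-line-12}.
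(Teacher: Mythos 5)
Your proof follows the same route as the paper's. Step~2 is essentially identical to the paper's argument: the trigger on \Cref{ln:two-epsilon-line-12} forces either $i' \le j$ (so $i' < j+1$) or, for a not-yet-interviewed $a_{i'}$, $u_{i'} = u^o_{ji'} > u^o_{ji} = u_i + \epsilon^P_{ji} \ge u_i$, hence $i' < i$; and Step~1 rests on the same observation that non-negative noise keeps $p_j$ at the top of $a_i$'s list and $a_i$ at the top of $p_j$'s proposer pool.

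One point in Step~1 should be tightened. Your fallback --- ``or in a later iteration, using monotonicity of the deferred-acceptance dynamics (positions only improve)'' --- does not close the argument; if the match could be deferred, that very monotonicity works against you, since a position whose tentative match has improved past $a_i$ would reject $a_i$'s re-proposal. The claim holds precisely because the match cannot be deferred: the interview changes no observed utility except $v^o_{ij}$ and $u^o_{ji}$, both weakly upward, so in the immediately following iteration of the while loop the algorithm again selects $j^* = j$ (no other unmatched applicant's $\beta$ changed and $j$ was already the minimum) and $i^* = i$ ($a_i$'s standing among $p_j$'s proposers only rose); the pair has now interviewed, the else branch fires, and since $p_j$ is unmatched it accepts $a_i$. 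Relatedly, your hedge ``$\beta(a_i)$ either is $p_j$ or something $a_i$ prefers even more'' has a vacuous second branch: only $p_j$'s observed utility changed for $a_i$, and it increased, so $\beta(a_i)=p_j$ outright. With those two sentences made definite, your argument coincides with the paper's proof.
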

\begin{proof}
If $\epsilon^A_{ij} \geq 0$ and $\epsilon^P_{ji} \geq 0$, the interview does not change $\beta(a_{i})$ and applicant $a_{i}$ remains the most preferable applicant proposing to $p_{j}$. Hence, the algorithm matches applicant $a_{i}$ to position $p_{j}$ in the next iteration. 

Now, observe that, based on \Cref{ln:two-epsilon-line-12} of \Cref{alg:two-epsilon},  $p_j$ is going to interview an applicant $a_{i'}$ if either $i' \leq j$ or if $u_{i'} >  u_{i} + \epsilon^P_{j i} > u_{i}$, which implies $i' < i$. 
\end{proof}

\begin{observation}\label{obs:consecutive-query}
If position $p_j$ interviews applicant $a_i$ and $i < j$, then  $a_i$ is interviewed by all $p_{j'}$ where $i \leq j' < j$.
\end{observation}
\begin{proof}
Before proposing to $p_j$,  $a_i$  proposes to all positions $p_{j'}$ with $j' < j$. Also, based on \Cref{ln:two-epsilon-line-12} of the algorithm, when $i \leq j'$,  $p_{j'}$ does not reject  $a_i$ without an interview. 
\end{proof}

The next lemma establishes that positions do not interview any applicant with a significantly higher index. Moreover, we show that with the possible exception of $8 \log n$ positions with the highest index (and lowest expected utility for the applicants), the positions get matched sequentially in the increasing order of their index. 

\begin{lemma}\label{lem:few-upward-queries}
With probability of at least $1 - n^{-2}$, if   $p_j$ and   $a_i$ interview each other, then $i \leq j + 8 \log n$. Also, at the time an applicant  $a_{i}$ proposes to   $p_{j}$, 
\begin{itemize}
    \item If $j < n - 8\log n$, then $\mu(p_{j'}) \neq \emptyset$ for $j' < j$,
    \item If $j \geq n - 8\log n$, then $\mu(p_{j'}) \neq \emptyset$ for $j' < n - 8\log n$.
\end{itemize}
\end{lemma}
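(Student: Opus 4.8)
The plan is to prove both claims by a coupled induction on the sequence of interviews/proposals, with the key probabilistic input being a bound on how far ``below'' an applicant $a_i$ can be pushed. Let me set up the main quantity: for each applicant $a_i$, track the set of positions it has proposed to. I want to show that with high probability, whenever $p_j$ interviews $a_i$ we have $i \le j + 8\log n$, which is equivalent to saying no applicant $a_i$ ever proposes ``upward'' past index $i + 8\log n$ in a way that triggers an interview. The cleanest route is to bound, for a fixed applicant $a_i$, the probability that $a_i$ is rejected by $8\log n$ consecutive positions $p_i, p_{i+1}, \dots, p_{i+8\log n - 1}$ before getting matched; a union bound over the $n$ applicants then gives the $1 - n^{-2}$ guarantee (with room to spare, since each individual event should have probability roughly $2^{-8\log n} = n^{-8}$, or we aim for at least $n^{-3}$ each).

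The heart of the argument is a ``sequential fill'' structure. First I would establish, using \Cref{obs:consecutive-query} and \Cref{ln:two-epsilon-line-12}, that when $a_i$ proposes to $p_j$ with $i \le j$, the position $p_j$ is forced to interview $a_i$ (it cannot reject without an interview), and — crucially — by \Cref{clm:positive-query}, if the interview reveals $\epsilon^A_{ij} \ge 0$ and $\epsilon^P_{ji} \ge 0$, then $a_i$ and $p_j$ get matched and $p_j$ never again interviews anyone with index $\ge i$. Since the $\epsilon$'s are symmetric about zero and independent, each such interview is a ``success'' (match) with probability at least $1/4$, independently across positions that $a_i$ newly meets. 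So the number of consecutive positions $p_i, p_{i+1}, \dots$ that can reject $a_i$ is stochastically dominated by a geometric-type variable, and the probability it exceeds $8\log n$ is at most $(3/4)^{8\log n}$, which for the natural logarithm is comfortably below $n^{-3}$. One subtlety to handle carefully: when $a_i$ proposes to $p_j$, the applicant $a_i$ might not be $p_j$'s favorite among current proposers, so the interview with $a_i$ need not happen immediately — but $a_i$ stays at $\beta(a_i) = p_j$ until it is either rejected or matched, and it can only be rejected by $p_j$ after $p_j$ has interviewed it (when $i \le j$), so the accounting still goes through; I need to phrase the ``consecutive rejections'' event in terms of the sequence of positions $a_i$ is rejected by, and argue that for each such position with index $\ge i$ the conditional match probability is $\ge 1/4$ regardless of the history (here independence of the fresh $\epsilon^A_{ij}, \epsilon^P_{ji}$ from everything revealed so far is what I lean on).

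Given the first part, the two bullet points about sequential matching follow by a deterministic argument conditioned on the high-probability event. Suppose $a_i$ proposes to $p_j$ with $j < n - 8\log n$; I want $\mu(p_{j'}) \ne \emptyset$ for all $j' < j$. By the rule in \Cref{ln:two-epsilon-line-12}, $j^*$ is always chosen as the smallest index with a pending proposal, so positions ``activate'' in increasing index order; I would argue inductively that once the algorithm has moved past index $j'$ (i.e., no unmatched applicant has $\beta(\cdot) = p_{j'}$), position $p_{j'}$ must be matched — because on the event of the first part every applicant $a_{i'}$ with $i' \le j'$ eventually reaches some position in $\{p_{i'}, \dots, p_{i'+8\log n}\}$, and more simply, a counting argument: among applicants $a_1, \dots, a_{j'+8\log n}$, enough of them ($\ge j'$ of them, since at most $8\log n$ ``escape'' upward) must be matched to positions of index $\le j'+8\log n$... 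Actually the cleaner deterministic fact is: at any point in the algorithm, the matched positions form a prefix $p_1, \dots, p_t$ possibly minus a bounded suffix near the top — I would prove that if $p_{j'}$ is unmatched and some applicant is currently proposing to $p_j$ with $j > j'$, then by the smallest-index rule all applicants with $\beta \ge j$ have already been rejected by $p_{j'}$, but an applicant $a_{i'}$ with $i' \le j'$ can only be rejected by $p_{j'}$ if $p_{j'}$ became matched (it cannot reject $a_{i'}$ without interviewing when $i' \le j'$, and after interviewing it rejects only if it has a better match) — so either $p_{j'}$ is matched, or every applicant that passed $p_{j'}$ had index $> j'$, and there are at most... this needs the first part to cap the number of such high-index applicants by $8\log n$, yielding the stated $n - 8\log n$ threshold.

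The main obstacle I anticipate is the bookkeeping in the second part: cleanly formalizing ``positions fill up as a near-prefix'' while an applicant can simultaneously be floating between proposals, and making the $8\log n$ slack in the two bullets line up exactly with the $8\log n$ slack from the concentration bound in the first part. The probabilistic part itself is routine once the right ``success with probability $\ge 1/4$, independent of history'' statement is isolated — the care is entirely in verifying that conditioning on the past does not corrupt this, which follows because each interview uses a fresh independent pair $(\epsilon^A_{ij}, \epsilon^P_{ji})$ and the event ``$p_j$ rejects $a_i$ anyway'' is contained in ``$\epsilon^A_{ij} < 0$ or $\epsilon^P_{ji} < 0$'' by \Cref{clm:positive-query}. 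I would also double-check the constant: we want the per-applicant failure probability below $n^{-3}$ so the union bound gives $1 - n^{-2}$, and $(3/4)^{8\log n} \le n^{-8 \cdot \log(4/3)}$ with natural log gives exponent $\approx 2.3$, which suffices; if a larger constant is needed the statement's $8\log n$ can absorb it, so I would just track it honestly.
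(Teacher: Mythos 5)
Your probabilistic core analyzes the wrong event, and the per-step bound it relies on is not valid. You fix an applicant $a_i$ and bound the probability that $a_i$ is rejected by $p_i,\dots,p_{i+8\log n-1}$; but the first assertion of the lemma is that no position $p_j$ ever interviews an applicant of index exceeding $j+8\log n$ — equivalently, that $a_i$ never interviews a position $p_j$ with $j< i-8\log n$. That is a statement about the state of the \emph{high} positions (whether $p_j$ is still unmatched, or matched with a badly negative $\epsilon^P$, when a low-quality applicant reaches it), not about how far down the list $a_i$ eventually falls, and your event neither implies it nor is implied by it. The paper's proof is position-centric: it argues deterministically that the first $8\log n$ applicants interviewed by $p_j$ all have index below $j+8\log n$, that each such interview independently reveals two non-negative $\epsilon$'s with probability at least $1/4$, and that once this happens \Cref{clm:positive-query} prevents $p_j$ from ever again interviewing anyone of index at least $j+8\log n$; the union bound is then taken over the $n$ positions, not the applicants.

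Separately, your key claim that each interview of $a_i$ by $p_l$ (with $l\geq i$) results in a match "with probability at least $1/4$ regardless of history" is false. \Cref{clm:positive-query} applies to an \emph{unmatched} position; if $p_l$ already holds an incumbent $a_{i'}$ with $u_{i'}+\epsilon^P_{li'}>u_i+\epsilon^P_{li}$, then $a_i$ is rejected even when both of its fresh $\epsilon$'s are non-negative, so the containment "rejection $\subseteq$ some $\epsilon<0$" breaks. Conditioning on the incumbent having won its earlier comparisons biases its $\epsilon^P$ upward, and the honest per-step win probability is only on the order of $1/(2\delta)$ with $\delta=\Theta(\log n)$ — this is precisely why the paper's bound on consecutive downward rejections (\Cref{lem:few-downward-queries1}) is $O(\log^2 n)$ rather than the $O(\log n)$ your geometric argument would yield. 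In effect your proposal attempts to prove a (false, or at least unprovable by this route) strengthening of \Cref{lem:few-downward-queries1} while leaving the actual content of \Cref{lem:few-upward-queries} unaddressed. Your deterministic treatment of the two bullet points, conditional on the first part, is on the right track and matches the paper's one-line argument that positions fill in increasing order of index once each gets matched within its first $8\log n$ proposals.
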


\begin{proof}
The first part of the lemma holds trivially when $j \geq n - 8 \log n$. So suppose $j < n - 8\log n$ and let $S$ be the set of agents $a_i$ with $i < j + 8 \log n$. At most $j-1$ agents from $S$ can be matched to position $p_{j'}$ for $j' < j$ at any time. Further, applicants propose to positions in the increasing order of their index. Hence, in order for $p_j$ to get matched to an applicant $a_{i'}$ with  $i' > j + 8\log n$, it should reject at least $8 \log n$ applicants from set $S$. We will show that such an event is extremely unlikely. 

    Let $\mathcal{I}$ be the set of the indices of the first $8 \log n$ applicants  interviewed by $p_j$. By the above argument, $\mathcal{I} \subseteq S$. By \Cref{clm:positive-query}, if $p_j$ interviews some applicant $a_i$ and observes that both $\epsilon^A_{ij}$ and $\epsilon^P_{ji}$ are non-negative, then $p_j$ and $a_i$ get matched.  Since $\epsilon^A_{ij}$ and $\epsilon^P_{ji}$ are drawn independently at random from a symmetric distribution with mean zero,  $\Pr[\epsilon^A_{ij} \geq 0 \text{ and } \epsilon^P_{ji} \geq 0] \geq 1/4$. Therefore, 
    \begin{align*}
        \Pr[\not\exists i \in \mathcal{I} \quad \epsilon^A_{ij} \geq 0 \text{ and } \epsilon^P_{ji} \geq 0] = \prod_{i \in \mathcal{I}} \Pr[\epsilon^A_{ij} < 0 \text{ or } \epsilon^P_{ji} < 0] \leq \left(\frac{3}{4}\right)^{8\log n} \leq n^{-3}.
    \end{align*}
    A union bound over the above events implies the probability that there exists  an $i$ such that $a_i$ and $p_j$ interview each other and $i > j + 8\log n$ is at most $n^{-2}$.

The second part of the lemma follows by observing that as long as each position $p_j$ with $j < n - 8 \log n$ gets matched to one of its first $8 \log n$ proposals, $p_{j+1}$ does not receive any proposals before $p_j$ is matched.
\end{proof}

For the rest of the section, we condition on the high probability events stated in \Cref{lem:few-upward-queries}. For $k\in [n]$, let $A_k = \{ a_k, a_{k+1}, \ldots, a_n\}$ and $P_k = \{ p_k, p_{k+1}, \ldots, p_n\}$. Also, let $\bar{A_k} = A \setminus A_k$ and $\bar{P_k} = P \setminus P_k$.

\begin{claim}\label{clm:few-upward-matches}
For any $k \in [n]$, 
\begin{align*}
    \Big\lvert \{(a_i, p_j) \in \mu \lvert i \geq k, j < k\} \Big\rvert \leq 8 \log n.
\end{align*}
\end{claim}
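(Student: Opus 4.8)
The plan is to read the claim off \Cref{lem:few-upward-queries} almost immediately. A pair $(a_i,p_j)\in\mu$ counted by the claim has $i\ge k>j$, so it is an ``upward'' match, with the applicant index strictly above the position index; since \Cref{lem:few-upward-queries} already limits how far \emph{up} an interview can reach, and since every matched pair must have interviewed, the applicant indices of these upward matches are confined to a window of length $8\log n$, and the matching property does the rest.

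Concretely, I would proceed in two steps. First, observe that whenever \Cref{alg:two-epsilon} executes \Cref{ln:matched-if-interivew} and sets $\mu(a_{i^*})\gets p_{j^*}$, the pair $(a_{i^*},p_{j^*})$ has already interviewed: this assignment lies in the else-branch of the test on \Cref{ln:two-epsilon-line-12}, and if the pair had not interviewed, reaching that else-branch would require $i^*>j^*$ together with $a_{i^*}\not\succ_{p_{j^*}}\mu(p_{j^*})$ --- in which case (using that an applicant is always preferred to $\emptyset$, and that the observed utilities of two distinct alternatives coincide only with probability zero) $p_{j^*}$ rejects $a_{i^*}$ and no match is formed. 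Hence every $(a_i,p_j)\in\mu$ comes from an interview. Second, fix $k$ and take any $(a_i,p_j)\in\mu$ with $i\ge k>j$; since $a_i$ interviewed $p_j$, \Cref{lem:few-upward-queries} gives $i\le j+8\log n\le (k-1)+8\log n$, so $i\in\{k,k+1,\dots,k-1+\lfloor 8\log n\rfloor\}$, a set of at most $8\log n$ indices. Because $\mu$ is a matching, each applicant appears in at most one such pair, so the number of pairs is at most $8\log n$.

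There is no real obstacle in this argument; the only point that needs a line of care is the first step --- verifying that \Cref{alg:two-epsilon} never matches a pair that has not interviewed, so that the index bound of \Cref{lem:few-upward-queries} indeed applies to \emph{every} matched pair --- and noting that this holds almost surely, which costs nothing since we have already conditioned on the high-probability event of \Cref{lem:few-upward-queries}.
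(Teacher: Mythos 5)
Your proof is correct and follows essentially the same route as the paper: both arguments invoke \Cref{lem:few-upward-queries} to confine the matched pairs to a window of $8\log n$ indices and then use the matching property, the only cosmetic difference being that you bound the applicant indices $i\in[k,\,k-1+8\log n]$ while the paper bounds the position indices $j\in[k-8\log n,\,k)$. Your explicit check that the algorithm never matches an uninterviewed pair is a small point the paper leaves implicit, but it does not change the argument.
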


\begin{proof}
    Since we condition on \Cref{lem:few-upward-queries}, 
    $
        \{(a_i, p_j) \in \mu \lvert i \geq k,  j < {k - 8 \log n}\}  = \emptyset.$
    Therefore, 
    \begin{equation*}
        \Big\lvert \{(a_i, p_j) \in \mu \lvert i \geq k, j < k\} \Big\rvert=
        \Big\lvert \{(a_i, p_j) \in \mu \lvert i \geq k, k - 8 \log n \leq j < k\} \Big\rvert
        \leq 8 \log n. \quad \qedhere
    \end{equation*}
\end{proof}

\begin{claim}\label{clm:few-upward-matches2}
Suppose that the algorithm is considering the proposal between applicant $a_{i^*}$ and position $p_{j^*}$.  Then, for any $k \leq j^*$, 
\begin{align*}
    \Big\lvert \{ (a_i, p_j) \in \mu \lvert i < k, j \geq k\} \Big\rvert \leq 8 \log n.
\end{align*}
\end{claim}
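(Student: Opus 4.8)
The approach is a pigeonhole/counting argument across the cut at index $k$, combining Claim~\ref{clm:few-upward-matches} (which controls the ``upward'' matched pairs, those with $i\ge k>j$) with Lemma~\ref{lem:few-upward-queries} (which, at the moment the proposal $(a_{i^*},p_{j^*})$ is examined, forces the low-index positions to already be matched). Write $D$ for the quantity to be bounded, i.e.\ the number of matched pairs $(a_i,p_j)\in\mu$ with $i<k\le j$ at the time the algorithm considers the proposal between $a_{i^*}$ and $p_{j^*}$.

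First I would dispose of the easy regime $k > n - 8\log n$. There, every pair counted by $D$ has its position among $p_k,p_{k+1},\dots,p_n$, a set of size $n-k+1\le 8\log n$; since distinct matched pairs use distinct positions, $D\le 8\log n$ at once.

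For the main regime $k\le n-8\log n$ I would argue as follows. Since $k\le j^*$ we have $k-1<j^*$, and since $k\le n-8\log n$ we have $k-1<n-8\log n$. Applying Lemma~\ref{lem:few-upward-queries} at the step where $a_{i^*}$ proposes to $p_{j^*}$: if $j^*<n-8\log n$ the lemma gives $\mu(p_{j'})\neq\emptyset$ for all $j'<j^*$ (hence for all $j'\le k-1$), and if $j^*\ge n-8\log n$ it gives $\mu(p_{j'})\neq\emptyset$ for all $j'<n-8\log n$ (again for all $j'\le k-1$). Either way, every position of index $<k$ is already matched, so there are exactly $k-1$ matched pairs whose position has index $<k$, and these use $k-1$ distinct applicants. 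By Claim~\ref{clm:few-upward-matches}, at most $8\log n$ of these pairs have an applicant of index $\ge k$; hence at least $k-1-8\log n$ of the applicants $a_1,\dots,a_{k-1}$ are matched to positions of index $<k$. Since $\{a_1,\dots,a_{k-1}\}$ has only $k-1$ elements, at most $8\log n$ of them remain to be matched to positions of index $\ge k$ (or to be unmatched), and this is exactly the bound $D\le 8\log n$.

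The only delicate point is the index bookkeeping in the main regime: verifying that Lemma~\ref{lem:few-upward-queries} may legitimately be invoked ``at the current moment'' with $j=j^*$, and that the inequalities $k-1<j^*$ and $k-1<n-8\log n$ are precisely what is needed to conclude that each $p_{j'}$ with $j'<k$ is matched in both cases of the lemma. Beyond this, the argument is elementary pigeonhole; I do not anticipate a substantive obstacle. (The edge case $k=1$ is covered automatically: the main-regime argument then reads ``$0$ matched pairs with position index $<1$'' and yields $D\le 8\log n$ trivially.)
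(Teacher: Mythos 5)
Your proof is correct and follows essentially the same route as the paper's: invoke Lemma~\ref{lem:few-upward-queries} to get that all positions of index $<k$ are matched at that moment, use Claim~\ref{clm:few-upward-matches} to lower-bound the number of matched pairs with both indices below $k$ by $(k-1)-8\log n$, and conclude by counting the $k-1$ applicants of index $<k$. Your explicit treatment of the regime $k>n-8\log n$ is just a spelled-out version of what the paper dismisses as trivial.
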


\begin{proof}
    We need to prove the statement for  $k < n - 8 \log n$. By \Cref{lem:few-upward-queries}, when the algorithm is considering pair $(a_{i^*}, p_{j^*})$, we have $\mu(p_{j'}) \neq \emptyset$ for $j' < k$. By \Cref{clm:few-upward-matches},
    \begin{align*}
        \Big\lvert \{(a_i, p_j) \in \mu \lvert i \geq k, j < k\} \Big\rvert \leq 8 \log n
    \end{align*}
    which implies,
    \begin{align}
        \Big\lvert \{(a_i, p_j) \in \mu \lvert i < k, j < k\} \Big\rvert \geq (k - 1) - 8 \log n. \label{eq:position-application-downward}
    \end{align}
    On the other hand,
    \begin{align*}
        \Big\lvert\{(a_i, p_j) \lvert i < k, j \geq k\}\Big\rvert &\leq (k - 1) - \Big\lvert\{(a_i, p_j) \lvert i < k, j < k\}\Big\rvert\\
        & \leq 8 \log n    & (\text{By } \Cref{eq:position-application-downward}) \qquad\qquad \qedhere
    \end{align*}
\end{proof}

\begin{lemma}\label{lem:few-downward-queries1}
    With a probability of at least $1-1/n$, if  position $p_j$ interviews applicant $a_i$ then $j \leq i + 2000\log^2 n$.
\end{lemma}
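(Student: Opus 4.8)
The goal is to bound how far "downward" an interview can reach: if $p_j$ interviews $a_i$ then $j \le i + O(\log^2 n)$. By Observation~\ref{obs:consecutive-query}, if $p_j$ interviews $a_i$ with $i < j$, then $a_i$ is interviewed by every $p_{j'}$ with $i \le j' < j$, so $a_i$ has been rejected by all of $p_i, \dots, p_{j-1}$. Thus it suffices to show that with high probability no applicant $a_i$ is rejected by $2000 \log^2 n$ consecutive positions of index at least $i$. I would fix an applicant $a_i$ and a starting index $k \ge i$, and try to show that the probability $a_i$ is rejected by all of $p_k, p_{k+1}, \dots, p_{k + c\log^2 n}$ (for suitable constant $c$) is at most $n^{-4}$ or so, then union bound over the $O(n^2)$ choices of $(i,k)$.

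**The core counting step.** Condition on the events of Lemma~\ref{lem:few-upward-queries}, and consider the window $W = \{p_k, \dots, p_{k+m}\}$ with $m = \Theta(\log^2 n)$. When $a_i$ proposes to and is interviewed by a position $p_j \in W$, it is rejected only if $p_j$ already holds (or then acquires) a strictly better applicant. Now I want to argue that most positions in $W$ end up matched to applicants of index close to their own (using Claims~\ref{clm:few-upward-matches} and~\ref{clm:few-upward-matches2}: at the moment $a_i$ is being processed near $p_{k+m}$, the number of "crossing" pairs $(a_{i'},p_{j'})\in\mu$ with $i' < k$, $j' \ge k$ is at most $8\log n$, and similarly few downward crossings). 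So among the $m$ positions in $W$, all but $O(\log n)$ of them, at the time $a_i$ interviews them, are either unmatched or matched to an applicant $a_{i'}$ with $i' \ge k \ge i$ — i.e., an applicant "competing from the same region." For $p_j$ to reject $a_i$ in favor of such an $a_{i'}$ (or to be passed over unmatched — which can't happen, since $p_j$ would match $a_i$), we need $u_{i'} + \epsilon^P_{ji'} > u_i + \epsilon^P_{ji}$, or the interview outcome $\epsilon^P_{ji}$ (together with $\epsilon^A_{ij}$) to be bad. By Claim~\ref{clm:positive-query}, if $a_i$ interviews $p_j$ while unmatched and both noise terms are nonnegative, they get matched — so each interview in $W$ independently fails to match $a_i$ only with probability at most $3/4$ from $a_i$'s own noise, and these $\epsilon$'s are fresh independent draws. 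Over $m = \Theta(\log^2 n)$ such interviews we get failure probability $(3/4)^{\Theta(\log^2 n)} \ll n^{-4}$.

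**The main obstacle.** The delicate point is that $a_i$ does not necessarily get a \emph{fresh} interview with each of the $m$ consecutive positions in the window: $p_j$ can reject $a_i$ \emph{without} an interview when $\mu(p_j) \succ_{p_j} a_i$ and $j < i$ — but in our window $j \ge k \ge i$, so Line~\ref{ln:two-epsilon-line-12} forces an interview whenever $a_i$'s proposal is $p_j$'s favorite among current proposers. The real subtlety is that $a_i$ might not be $p_j$'s favorite proposer at the moment it proposes — other applicants of comparable index may be ahead of it — so $a_i$ could be bounced along without interviewing. I would handle this by following $a_i$'s rejection chain and arguing that each position that rejects $a_i$ \emph{via interview} contributes an independent $3/4$ factor, while positions that reject $a_i$ \emph{without} interview (because $j < i$, impossible here, or because $a_i$ isn't the favorite) must be "absorbing" a better applicant who then stays — and by the crossing-pair bounds there are at most $O(\log n)$ applicants from outside the region who can be doing this absorbing. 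Making this bookkeeping precise — carefully separating "interview rejections" (which give independent coin flips whose number I must lower-bound by $\Omega(\log^2 n)$) from "no-interview rejections" (which I must cap at $O(\log n)$ using the claims) — is where the $2000$ and the $\log^2$ come from, and is the technical heart of the argument.

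**Assembling the bound.** Putting it together: fix $(i,k)$ with $k \ge i$; on the good event of Lemma~\ref{lem:few-upward-queries}, if $p_{k + 2000\log^2 n}$ interviews $a_i$ then $a_i$ was rejected by $p_k, \dots, p_{k+2000\log^2 n - 1}$; of these at least, say, $1000\log^2 n$ rejections are interview-rejections (the rest bounded by $O(\log n)$ via Claims~\ref{clm:few-upward-matches} and \ref{clm:few-upward-matches2}); each interview-rejection independently requires $\epsilon^A_{ij} < 0$ or $\epsilon^P_{ji} < 0$, probability $\le 3/4$; so the probability is at most $(3/4)^{1000\log^2 n} \le n^{-5}$. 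Union-bounding over the $\le n^2$ pairs $(i,k)$ and adding the $n^{-2}$ failure probability of Lemma~\ref{lem:few-upward-queries} gives overall failure probability $\le 1/n$, as claimed.
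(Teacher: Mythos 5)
Your overall scaffolding matches the paper's: condition on \Cref{lem:few-upward-queries}, use \Cref{obs:consecutive-query} to conclude that an interview between $a_i$ and $p_{i+\gamma+1}$ forces $a_i$ to have already interviewed every position in the window $\{p_i,\dots,p_{i+\gamma}\}$, and use \Cref{clm:few-upward-matches2} to cap at $8\log n$ the positions in the window that can be matched to applicants of index below $i$. But your core probabilistic step is wrong. You claim each interview in the window ``independently fails to match $a_i$ only with probability at most $3/4$ from $a_i$'s own noise,'' invoking \Cref{clm:positive-query}. That claim requires $p_l$ to be \emph{unmatched} at the time of the interview, and it only yields a \emph{tentative} match that can later be broken. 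In the window, $p_l$ is typically already holding (or will later receive) a competitor, and $a_i$ is rejected unless it beats that competitor's $u_{i'}+\epsilon^P_{li'}$ — an event whose probability is governed by the size of the competition pool, not by the sign of $a_i$'s own noise. The paper therefore defines $S_l$ (the set of up to $2\delta = O(\log n)$ applicants that $p_l$ could end up matched to) and the event $Y_l$ that $p_l$ prefers $a_i$ to \emph{all} of $S_l$; this has probability only $\Omega(1/\log n)$, not $\Omega(1)$. With $\gamma=\Theta(\log^2 n)$ positions the expected number of ``permanent wins'' is $\Omega(\log n)$, which a Chernoff bound pushes above the $8\log n$ budget of positions allowed to match below index $i$ — forcing one of them to match $a_i$. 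Your per-interview constant $3/4$ would prove that an $O(\log n)$ window suffices, which is false: already in the single-tier case the worst-case rank of a match under random preferences is $\Theta(\log^2 n)$ (\Cref{lem: worst-case-ranking}), so any argument giving $O(\log n)$ must contain an error.

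A secondary issue: the obstacle you flag as ``the technical heart'' — rejections without an interview — does not arise inside the window at all, since $i\le l$ forces an interview by \Cref{ln:two-epsilon-line-12} (this is exactly \Cref{obs:consecutive-query}, which you had already invoked). The bookkeeping you propose around ``no-interview rejections'' is solving a non-problem, while the actual difficulty — that an interviewed $a_i$ loses the competition at $p_l$ despite favorable own-noise, or wins tentatively and is displaced later — is the step your write-up elides.
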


\begin{proof}
    Let $\delta = 8\log n$ and $\gamma = 1999\log^2 n$. The statement holds for $i \geq n - \gamma - \delta$ since $n - i \leq \gamma + \delta < 2000\log^2 n$. Suppose for some $i < n - \gamma - \delta$, applicant $a_i$ and position $p_{i + \gamma + 1}$ interview each other at some point during the algorithm.  At that time, by \Cref{lem:few-upward-queries}, all positions with  index smaller than $i + \gamma + 1$ are matched to an applicant. Also, by \Cref{obs:consecutive-query}, all positions in $\{p_i, p_{i+1},\ldots, p_{i + \gamma}\}$ have already interviewed $a_i$, but none of them is matched to $a_i$. We will show that the probability of such an event is at most $n^{-4}$. 

Consider all positions $p_l$ for $i \leq l \leq i +\gamma$ and define 
    \begin{align*}
        S_l = \{a_{i}, a_{i+1}, \ldots, a_{l + \delta}\} \setminus \{\mu(p_{j'}) \lvert i \leq j' < l \}.
    \end{align*}
By \Cref{clm:few-upward-matches2}, there are at most $\delta$ positions in the set of $\{p_{j'} \lvert i \leq j' < l \}$ which are matched to an applicant with an index smaller than $i$. The rest are matched to an applicant in $\{a_{i}, a_{i+1}, \ldots, a_{l + \delta}\}$.  Therefore, $|S_l| \leq 2 \delta$. 



Define $Y_l$  to be indicating whether $p_l$'s utility for matching to $a_i$ is higher than being matched to all elements in $S_l$ and $X_l$ to be  $Y_l = 1$, $\epsilon^P_{li} \geq 0$, and $\epsilon^A_{li} \geq 0$. By \cref{clm:positive-query}, $X_l = 1$ implies that  $p_l$  can not be matched to an applicant with an index more than $i$. Further, by \Cref{clm:few-upward-matches2} at most $\delta$ positions in $\{p_i, p_{i+1},\ldots, p_{i + \gamma}\}$  may be matched to an applicant with index less than $i$, so it is sufficient to bound the probability that $X 
 = \sum_{j=i}^{i+\gamma} X_j   \leq \delta$.

Observe that $\Pr[Y_l = 1] \geq 1/(2\delta)$ for all $l$. Also, note that because of the independence of the subjective component of utilities, $Y_l$'s are independent. Further, $E[X_l] = 1/4 E[Y_l] \geq 1/(8\delta)$. 
Using Chernoff bound,
    \begin{align*}
        \Pr[X \leq \delta] \leq \Pr[|X - \E[X]| \geq \frac{\gamma}{16\delta}] \leq 2\exp\left(-\frac{\gamma^2/(16\delta)^2}{3 \gamma/(8\delta)}\right) \leq 2\exp\left(- \frac{\gamma}{96\delta} \right)\leq n^{-2}.
    \end{align*}
     The first inequality is due to
    $
        \E[X] - \frac{\gamma}{16\delta}  \geq \frac{\gamma+1}{8\delta} - \frac{\gamma}{16\delta} > \frac{\gamma}{16\delta} \geq \delta,
    $ 
    assuming $n$ is sufficiently large. 
    Applying  union bound over all applicants completes the proof.
\end{proof}

\begin{proof}[Proof of \Cref{thm:query-complexity-two-epsilon}]
    The statements of \Cref{lem:few-upward-queries} and \Cref{lem:few-downward-queries1} hold with high probability, for all applicants $a_i$. Therefore, every $a_i$ is interviewed only by positions $p_j$ where $j \in [\max(1, i - 8 \log n), i + \min(n, 2000\log^2 n)]$. Similarly, for every position $p_j$  in \Cref{alg:two-epsilon}, the position only interviews  $a_i$'s for which $i \in [\max(1, j - 2000\log^2 n), \min(n, j + 8\log n)]$. Therefore, the number of interviews done by each applicant or position is at most $O(\log^2 n)$.

    The proof of stability is fairly similar to the analysis of deferred acceptance. Suppose
    there exists a blocking pair $(a_i, p_j)$. This implies that $a_i \succ_{p_j} \mu(p_j)$  and $p_j \succ_{a_i} \mu(a_i)$. Therefore, $a_i$ must have been rejected by $p_j$ during one of the iterations of \Cref{alg:two-epsilon}.
    This implies that position $p_j$ is matched with an applicant who has a higher observed utility, and the current match of $p_j$ should not be worse than $a_i$, i.e., $\mu(p_j) \succ_{p_j} a_i$. That is a contradiction, as we initially assumed that $(a_i, p_j)$ is a blocking pair. 
\end{proof}

\section{A Non-Adaptive Algorithm for Tiered Markets}

The algorithm in the previous section forms the sequence of interviews adaptively, suggesting each interview based on the outcome of the earlier ones. However, many markets operate in a more simultaneous manner, where there is a stage of interviews followed by a clearing phase. Therefore, it may be more efficient if the interview lists are decided either in advance or independently for each position.

In this section, we present a non-adaptive algorithm for the problem in markets in which applicants and positions are partitioned into tiers, extending the model in \cite{ashlagi2020clearing} to allow for post interaction noise. Applicants have the same ex-ante utility for two positions in the same tier but prefer a position in a higher tier to a lower-tier position with probability 1. The same is true for positions.

More formally, let $0 = \tau_0 < \tau_1 < \ldots < \tau_M=m$ denote the tiers of positions.  Remember the utility of an agent $a_i$ for position $p_j$, $v_{ij} = v_j + \epsilon^A_{ij}$.  
We will assume that $v_j = v_k$ if and only if $p_j$ and $p_k$ are in the same tier, i.e.,  $j,k \in [\tau_{l} + 1, \tau_{l+1}]$ for some $l$, 
and $v_j >> v_k$ if $p_j$ is in a higher tier than $p_k$, i.e. $j \leq \tau_l < k$ for some $l$.  Further, $\epsilon^A_{ij}$'s are all independent and identically distributed, and their distributions are bounded and symmetric around $0$. Therefore,  $\Pr[v_{ij} \geq v_{ik}] = 1/2$ if $p_{j}$ and $p_k$ belong to the same tier and equal to 1 otherwise. We use {\em relative index} of a position to show the index of the position within the tier it belongs to. Formally, for position $p_j$ where $ \tau_{l-1} < j  \leq \tau_l$, the relative index is equal to $j - \tau_{l-1}$. Define the tiers $0 = \gamma_0 < \gamma_1 < \ldots < \gamma_N=n$ for applicants similarly. We assume that the number of applicants and positions are equal in the general tiered markets, i.e. $n = m$, and every applicant and position will be matched. However, to develop the algorithm for the general tiered market, we use two subroutines that might need to solve a subproblem with $n\neq m$ case (see \cref{sec:example-one} and \cref{sec:example-two}).



Our non-adaptive algorithm works in two phases. In the first phase, the algorithm proposes a set of interviews between the two sides. We represent this by a bipartite graph $G(A, P, E)$, where each edge $e\in E$ denotes an interview between its endpoints. In the second phase, after the interviews are conducted and the corresponding $\epsilon_{ij}^A , \epsilon_{ji}^P$  for  $(a_i, p_j) \in E$ are revealed, the algorithm finds an interim stable matching between two sides. The main result of this section is that the proposed interviews in the first phase are such that  (i) no agent or position has more than $O(\log^3 n)$ interviews, and (ii) after the interviews are done, the algorithm can find an interim stable matching supported by $G(A, P, E)$ with high probability.

In order to explain the main ideas of the algorithm, it will be instructive to look at two special cases in the following subsections.

\subsection{Example 1: Single Tier Structure}\label{sec:example-one}
In the first example, the applicants and positions are each in their own single tier, i.e. $N = M = 1$. The number of applicants and positions can be different\footnote{Note that in the setting of the general tiered market, we assume $n=m$, however, the single tier structure serves as a subroutine of the algorithm for the general tiered market and might need to solve a subproblem with $n\neq m$.} but assume that $n$ and $m$ are sufficiently large, and without loss of generality, $n \leq m$. In this case, first form $G$ by connecting each applicant to $\delta$ positions chosen uniformly at random from the $P'=\{p_1, p_2, \ldots, p_n\}$.

Each applicant $a_i$ interviews with $\delta$ positions and the distribution of $\epsilon_{ij}^A$'s are symmetrically distributed around 0. So, for every $i$,   $\delta/2$ of the $\epsilon_{ij}^A$'s are going to be non-negative in expectation. In fact, a simple Chernoff bound shows that with high probability, for every applicant $a_i$, at least $\delta/3$ of positions $p_j$ interviewing the applicant have a non-negative $\epsilon_{ij}^A$.

\begin{restatable}{claim}{positiveCounts}\label{lem:positive-counts}
    Among $\delta$ positions $p_j$ that applicant $a_i$ is interviewing, at least $\delta/3$ of them have $\epsilon_{ij}^A \geq 0$ with probability $1-\exp(-\delta/36)$.
\end{restatable}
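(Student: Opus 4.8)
The plan is to apply a standard Chernoff (multiplicative deviation) bound to the sum of indicator variables counting how many of the $\delta$ interviews of a fixed applicant $a_i$ yield a non-negative subjective interest, and then record that the claim for a single applicant follows directly (a union bound over all $n$ applicants, which the paper invokes in the surrounding text, is not part of this particular statement).

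\begin{proof}[Proof sketch]
Fix an applicant $a_i$. Let $p_{j_1}, \ldots, p_{j_\delta}$ be the $\delta$ positions it interviews, and for each $t \in [\delta]$ let $Z_t = \mathbf{1}[\epsilon^A_{i j_t} \geq 0]$. Since the distribution of each $\epsilon^A_{ij}$ is symmetric around $0$, we have $\Pr[\epsilon^A_{ij} \geq 0] \geq 1/2$, so $\E[Z_t] \geq 1/2$; moreover the $\epsilon^A_{ij}$'s are independent, hence so are the $Z_t$'s. Let $Z = \sum_{t=1}^\delta Z_t$, so $\mu := \E[Z] \geq \delta/2$. We want to bound $\Pr[Z < \delta/3]$. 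Since $\delta/3 \leq (2/3)\mu$, the event $\{Z < \delta/3\}$ is contained in $\{Z < (2/3)\mu\}$, i.e. a downward deviation of $Z$ from its mean by a multiplicative factor $1/3$. The multiplicative Chernoff bound gives $\Pr[Z \leq (1-\eta)\mu] \leq \exp(-\eta^2 \mu / 2)$; taking $\eta = 1/3$ and $\mu \geq \delta/2$ yields
\begin{align*}
    \Pr[Z < \delta/3] \leq \exp\!\left(-\frac{(1/3)^2 \mu}{2}\right) \leq \exp\!\left(-\frac{\delta}{36}\right),
\end{align*}
which is the claimed bound.
\end{proof}

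The only subtlety — and it is minor — is being careful that the randomness here is over the draws of $\epsilon^A_{i j_t}$, which are independent regardless of which positions were selected into $P_i$; conditioning on the (independent) choice of $P_i$ does not affect the distribution of these noise terms, so the Chernoff bound applies verbatim. I do not expect any real obstacle: the statement is a textbook concentration estimate, and the constants ($\delta/3$, $\exp(-\delta/36)$) are chosen precisely so that a crude application of the standard bound with $\eta = 1/3$ closes. If one wanted to be even more careful about the exact constant in the exponent, one could instead invoke the Hoeffding bound for bounded independent variables, $\Pr[Z \leq \mu - t] \leq \exp(-2t^2/\delta)$ with $t = \mu - \delta/3 \geq \delta/6$, which also yields an $\exp(-\Theta(\delta))$ tail; the Chernoff route above already suffices for the stated constant.
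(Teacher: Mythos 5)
Your proof is correct and matches the paper's own argument essentially verbatim: both define the indicator variables for $\epsilon^A_{ij}\geq 0$, note they are i.i.d.\ Bernoulli with mean at least $1/2$ by symmetry and independence of the noise terms, and apply the multiplicative Chernoff bound $\Pr[Z\leq(1-\eta)\mu]\leq\exp(-\eta^2\mu/2)$ with $\eta=1/3$ and $\mu\geq\delta/2$ to obtain $\exp(-\delta/36)$. No issues.
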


Let $\pi_i^A$ be the preference list of applicant $a_i$ in the decreasing value of $\epsilon_{ij}^A$ (break ties randomly). It is not hard to see that each $\pi_i^A$ is a random permutation chosen over the ordering of positions. Following the same argument, we can form  $\pi_j^P$ to be the random permutations representing the preference list of position $p_j$ according to the decreasing values of $\epsilon_{ji}^P$. The above observation allows us to take advantage of the following lemma.

\begin{lemma}[\cite{pittel1992likely}, Theorem 6.1]\label{lem: worst-case-ranking}
    Suppose that we run the applicant-proposing Gale-Shapley algorithm for the above random permutations. With probability $1-O(m^{-c_t})$, each applicant $a_i$ gets matched to a position with a rank less than $(2+t)\log^2 m$ in permutation $\pi_i^A$, where $c_t=2t[3+(4t+9)^{1/2}]^{-1}$.
\end{lemma}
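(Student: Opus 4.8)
The plan is to recognize this as an instance of the classical analysis of applicant-proposing Gale--Shapley on uniformly random, independent preference lists, so the proof reduces to checking hypotheses and then quoting Pittel. First I would confirm the hypotheses, which the preceding discussion essentially already supplies: each $\pi_i^A$ is a uniformly random permutation of the positions (the interviewed positions are chosen uniformly at random and the $\epsilon^A_{ij}$ are i.i.d.\ symmetric, with ties broken uniformly at random), each $\pi_j^P$ is likewise a uniformly random permutation of the applicants, and all of these are mutually independent because the $\epsilon^A$'s and the $\epsilon^P$'s are independent of each other and of the interview choices. Under these hypotheses the statement — every applicant's partner has rank below $(2+t)\log^2 m$ except with probability $O(m^{-c_t})$, $c_t = 2t/[3+(4t+9)^{1/2}]$ — is exactly Theorem~6.1 of \cite{pittel1992likely}, so I would invoke it directly. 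Since that reference states the bound for a balanced market while here $n\le m$, I would add one line: having $m\ge n$ positions only helps the proposing side, since in the proposal process each proposal lands among $m\ge n$ options and the run stops once $n$ positions are filled, which can only decrease every applicant's number of proposals relative to the $n=m$ case.

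If one wants the idea behind Pittel's theorem rather than a black-box citation, I would sketch it as follows. By the principle of deferred decisions, when $a_i$ makes its $r$-th proposal we may reveal the $r$-th entry of $\pi_i^A$ on the spot, so it is uniform among the positions $a_i$ has not yet tried; coupling with an ``amnesiac'' variant in which every proposal is uniform over \emph{all} $m$ positions only increases each applicant's proposal count, and — by Knuth's observation that the number of proposals does not depend on the receiving side's preferences — we may also resolve each head-to-head contest by an independent fair coin. The amnesiac coin variant is then literally a balls-into-bins process: a ball in an empty bin engages the currently free applicant, a ball in an occupied bin triggers a contest whose coin-flip loser becomes the new free applicant, and the run halts once $n$ bins are non-empty; the rank of $a_i$'s partner is the number of balls thrown on behalf of $a_i$.

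The substantive step is bounding $\max_i(\text{balls for }a_i)$. The easy estimates are that the total number of balls is $\Theta(m\log m)$ with high probability (a coupon-collector count: going from $k$ to $k+1$ filled bins costs a $\mathrm{Geom}((m-k)/m)$ number of balls) and that no single bin hosts more than $O(\log m)$ contests with high probability. What is genuinely delicate — and what I would not reprove — is that these two $\Theta(\log m)$ scales compound into an $O(\log^2 m)$ bound on one applicant's proposal count, because a repeatedly displaced applicant may keep landing in heavily contested bins; controlling this ``bounce chain'' uniformly over all applicants, with the precise tail $O(m^{-c_t})$, is the content of \cite{pittel1992likely}, Theorem~6.1. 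So the honest plan is to verify the i.i.d.-uniform-permutation hypothesis, note the unbalanced case is no harder, and cite Pittel; the tail-bound step is the one real obstacle to a self-contained proof.
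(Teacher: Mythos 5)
This lemma is stated in the paper purely as a citation of Pittel's Theorem~6.1 and is given no proof of its own, and your plan --- verify that the $\pi_i^A$, $\pi_j^P$ are independent uniform permutations and then quote Pittel, with an optional sketch of the deferred-decision/balls-into-bins argument behind it --- is exactly the same approach. Your added remark on the unbalanced case is harmless but not needed here, since in the paper's Example~1 the Gale--Shapley run is over the balanced submarket of $n$ applicants and the $n$ positions of $P'$.
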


The above lemma combined with \Cref{lem:positive-counts} implies that every applicant $a_i$ is matched with a position $p_j$ where $\epsilon_{ij}^A > 0$.


\begin{restatable}{lemma}{finalBoundSingleTier}\label{lem:final-bound-single-tier}
    For any $t$ and $\delta$ such that $\delta\geq 3(2+t)\log^2 m$, with probability $1-\exp(-\delta/36)-O(m^{-c_t})$, each applicant $a_i$ gets matched to a position $p_j$ that has interviewed $a_i$ and $\epsilon_{ij}^A > 0$, where $c_t=2t[3+(4t+9)^{1/2}]^{-1}$.
\end{restatable}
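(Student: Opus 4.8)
The plan is to combine the two ingredients already established: \Cref{lem:positive-counts}, which guarantees that each applicant $a_i$ has at least $\delta/3$ interviewed positions with $\epsilon_{ij}^A \geq 0$, and \Cref{lem: worst-case-ranking}, which controls the rank (in $a_i$'s own preference list $\pi_i^A$) of the position $a_i$ is matched to under applicant-proposing Gale–Shapley. The key observation linking them is that $\pi_i^A$ orders positions by decreasing $\epsilon_{ij}^A$: the top $\delta/3$ entries of $\pi_i^A$ are therefore exactly the positions with the $\delta/3$ largest noise values, all of which are non-negative by \Cref{lem:positive-counts}. So if $a_i$'s match has rank at most $\delta/3$ in $\pi_i^A$, then its match $p_j$ satisfies $\epsilon_{ij}^A \geq 0$; and since the distribution is continuous (or ties broken randomly), in fact $\epsilon_{ij}^A > 0$ with the remaining probability mass going to the measure-zero tie event, which is subsumed in the failure probability. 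Since $p_j$ appears in $\pi_i^A$ only if $a_i$ interviewed $p_j$, this also certifies that $a_i$ was interviewed by its match.

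First I would fix the parameter $t$ and set $r = (2+t)\log^2 m$, and invoke \Cref{lem: worst-case-ranking}: with probability $1 - O(m^{-c_t})$, every applicant $a_i$ is matched to a position of rank at most $r$ in $\pi_i^A$. Next, under the hypothesis $\delta \geq 3(2+t)\log^2 m = 3r$, we have $r \leq \delta/3$. Then I would invoke \Cref{lem:positive-counts}: with probability $1 - \exp(-\delta/36)$, applicant $a_i$ has at least $\delta/3 \geq r$ of its interviewed positions with non-negative $\epsilon_{ij}^A$. Combining via a union bound over the two failure events, with probability $1 - \exp(-\delta/36) - O(m^{-c_t})$ both hold simultaneously for all applicants. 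On this good event, for each $a_i$ its matched position $p_j$ has rank $\leq r \leq \delta/3$ in $\pi_i^A$, and hence lies among the $\delta/3$ positions with largest (hence non-negative) noise, giving $\epsilon_{ij}^A \geq 0$, with the equality case absorbed into the failure probability. Moreover $p_j \in P_i$, so $p_j$ interviewed $a_i$.

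I do not expect a serious obstacle here — this is a short bridging lemma. The one point needing mild care is the quantifier bookkeeping: \Cref{lem:positive-counts} as stated is per-applicant, and \Cref{lem: worst-case-ranking} is already a global (all-applicants) statement, so I should either state \Cref{lem:positive-counts} with a union bound over all $n \leq m$ applicants folded into a slightly weaker constant in the exponent, or simply note that $n\exp(-\delta/36) = \exp(-\delta/36 + \ln n)$ is still of the claimed order once $\delta = \Omega(\log^2 m)$ and absorb it. The other subtlety is the strict-versus-weak inequality $\epsilon_{ij}^A > 0$ versus $\geq 0$: if the noise distribution has an atom at $0$ this requires the random tie-breaking to place such positions appropriately, but since the statement only claims the high-probability event, the measure of $\{\epsilon_{ij}^A = 0\}$ configurations is handled by the existing slack in the failure probability (and is zero for continuous distributions).
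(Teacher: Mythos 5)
Your proposal is correct and is essentially the paper's own argument: the paper proves this lemma by exactly the combination you describe, invoking \Cref{lem: worst-case-ranking} to bound the matched rank by $(2+t)\log^2 m \le \delta/3$ and \Cref{lem:positive-counts} to guarantee the top $\delta/3$ entries of $\pi_i^A$ all have non-negative noise, with a union bound over the two failure events. Your extra care about the per-applicant union bound (the $n\exp(-\delta/36)$ factor, absorbed since $\delta = \Omega(\log^2 m)$) and about the strict inequality $\epsilon_{ij}^A>0$ is, if anything, more explicit than the paper's one-line justification.
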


\subsection{Example 2: One Large Tier vs Multiple Singleton Tiers}\label{sec:example-two}

In this example, all positions belong to the same tier, but each applicant is in a tier of its own. In other words, the applicants are ex-ante indifferent among positions, but all positions  prefer applicant $a_i$ to  $a_j$ when $i < j$. 

Before presenting our solution, it may be worthwhile to understand why forming the interview graph randomly in the same fashion as our previous example is not suitable for this variation of the problem. First, note that since agent $a_1$ is the first choice for all the positions,  $a_1$ should be matched to its most preferred position in every interim stable matching. Removing applicant $a_1$ and that position, the same argument applies to applicant $a_2$, and so on.

Now suppose that  each applicant interviews $\delta$ random positions similar to the algorithm in \Cref{sec:example-one}. Since we choose $\delta$ random positions for applicant $a_1$ to interview and then applicant $a_1$ chooses the most preferred position after the interviews, all positions have the same probability of being the most preferred position for applicant $a_1$ in this process. Therefore, the process is equivalent to the case that we choose a random position for applicant $a_1$ and remove both, then we choose a random position for $a_2$ and remove both, and so on. Suppose the algorithm successfully matches the first $n-1$ applicants to their positions. When only the last applicant $a_n$ remains, it is crucial to show that the only remaining position should be one of the positions that $a_n$ interviewed. Otherwise, the resulting matching is not interim stable. Since we are removing positions randomly, the probability that the last remaining position is among the $\delta$ position that $a_n$ interviewed is $\delta/m$. Therefore, we cannot find an interim stable matching supported within $G$ if we want $\delta=poly(\log m)$.

Instead, for this example, our algorithm forms $G$ by connecting every agent $a_i$ to positions $p_j$ such that $j \in [\max(1, i-\theta), \min(n, i + \delta)]$, where $\theta = \Theta(\log^3 m)$ and $\delta = \Theta(\log^2 m)$.  Note that the degree of applicants with an index close to 1 or $n$ may be smaller than the rest of the applicants. 

In this construction, applicants whose tiers are close to each other interview for sets of positions that are similar. Therefore, there is a high correlation between the positions to which they get matched. As a result, with high probability, for every $i$, the set of positions that have interviewed $a_i$  and are not matched to one of the higher-tier applicants is non-empty. Further, $a_i$ can find a stable match in this set. We will prove this statement in more generality in the next section when analyzing the algorithm for the general case.




\subsection{An Algorithm for General Tiered Markets}
\label{sec:nonadaptive}

We are now ready to present our algorithm and its analysis for tiered markets in their full generality. As we said before, the algorithm works in two phases. In phase 1,  \Cref{alg:general-tiers-interview} constructs a bipartite interview graph $G(A, P, E)$ between applicants and positions. This is done possibly in several iterations. In each iteration, $i$, the algorithm adds an edge set $E_i$ between two subsets $A_i \subseteq A$ and $P_i \subseteq P$ to $E$. Further, it identifies whether the applicants or positions are going to be the proposing side in this part of the graph. Note that  an applicant or a position can be in multiple $A_i$s or $P_i$s. 

In the second phase, \Cref{alg:general-tiers-stable} conducts the interviews between all the pairs in $E$ and updates the preferences of the two sides. Then, in each iteration $i$, it removes vertices that are matched in earlier iterations and implements either a position or applicant-proposing deferred acceptance algorithm between $A_i$ and $P_i$. We discuss the details of this algorithm later in the section. 

In practice, one should first implement \Cref{alg:general-tiers-interview} to construct the interview graph and then \Cref{alg:general-tiers-stable} to implement the interviews and find the stable matching. But it will be useful for the analysis to couple the two algorithms and consider them together after each iteration $i$. 




\Cref{alg:general-tiers-interview} considers applicants and positions starting from the top. Let $X$ and $Y$ be the set of top-tier applicants and positions. Label the sets $X$ and $Y$ in such a way that $|X| \leq |Y|$. 
We will consider three different cases: 
 (1) If both $|X|$ and $|Y|$ are relatively small, i.e. $|X| \leq \delta$ and $|Y| \leq \delta$ for $\delta = \Theta(\log^2 n)$, we can afford to set up an interview for each pair between $X$ and $Y$. (2) If  $|X| \leq \delta$ and $|Y| > \delta$, we choose a set $S$ of size $\delta$ comprised of vertices of $Y$ with the lowest index and set up an interview between each pair in $S$ and $X$. (3) If both $|X|$ and $|Y|$ are large than $\delta$, we use the same approach as \Cref{sec:example-one}. Specifically, for each vertex in $X$, we choose $\delta$ random vertices from the first $|X|$ positions in $Y$ to interview.

When both $|X|$ and $|Y|$ are larger than $\delta$, we can remove both $X$ and the first $|X|$ vertices of $Y$ from $A$ and $P$ and move to the next iteration of the algorithm. The situation is more subtle when for a short tier $X$, we choose a set $S \in Y$ which has a size larger than $|X|$. In this case, in the same iteration of \Cref{alg:general-tiers-stable}, some of the vertices of $S$ will remain unmatched. A priori, and without knowing the outcome of the interviews between these two sets, we do not know which vertices of $S$ are going to remain unmatched. Therefore, we will not remove any vertex from $Y$ immediately. Because of this subtlety, we will have to keep track of the ``effective cardinality'' $e(X)$ and $e(Y)$. They are updated to be equal to the number of unmatched vertices in $X$ and $Y$, respectively, at the same iteration in \Cref{alg:general-tiers-stable}.

The algorithm continues in the same fashion. At every step, the top non-empty tiers $X$ and $Y$ from each side are selected and named so that $e(X) \leq e(Y)$. We call $X$ the {\em short side} and $Y$ the {\em long side}.  

\begin{algorithm}[H]
    \caption{The Algorithm for Constructing the Interview Graph}
    \label{alg:general-tiers-interview}
        Let $\delta = 36\log^2 n$ and $\theta = 72\log^3 n$.

        Initialize $e(X)=|X|$ for all tiers $X$ as the number of unmatched vertices in the tier.

        Initialize $G(A, P, E)$ to be an empty graph, $D \gets []$, and $k \gets 1$.

        \While{$A \neq \emptyset$ and $P \neq \emptyset$}{
            Consider the top non-empty tiers of $A$ and $P$ and label them $X$ and $Y$ s.t. $e(X) \leq e(Y)$. \label{ln:top-tiers-selection}

            \If{$e(Y) \leq \delta$}{
                Let $E_k$ be the set of all edges between $X$ and $Y$. 
                \algorithmiccomment{Case 1}
                
                $e(Y) \gets e(Y) - e(X)$.
            }
            \ElseIf{$e(X) \leq \delta$}{
                Let $S$ be the set of $\delta + |Y| - e(Y)$ vertices in $Y$ with the lowest index.
                \algorithmiccomment{Case 2}
                    
                Let $E_k$ be the set of all edges between $X$ and $S$.

                $e(Y) \gets e(Y) - e(X)$.
            }
            \Else{
                Let $S$ be the set of $e(X) + |Y| - e(Y)$ vertices in $Y$ with the lowest index.
                \algorithmiccomment{Case 3}

                Form  $E_k$ by connecting each $x \in  X$ to $\delta + |Y| - e(Y)$ random vertices of $S$.

                Remove all vertices of $S$ from $Y$ and then $e(Y) \gets e(Y)-e(X)=|Y|$. \label{ln:line-16-case3}
            }

            \lIf{$X$ is in applicant side}{$D(k) \gets \text{'applicant proposing'}$}
                \lElse{$D(k) \gets \text{'position proposing'}$}

            Remove all vertices of $X$ and $e( X) \gets 0$. \label{ln:remove-short-side}

            \lIf{$e(Y) = 0$}{remove all vertices of $Y$}
            
            \lIf{$|Y| - e(Y) \geq \theta$}{remove the $|Y| - e(Y) - \theta$ vertices of $Y$ with lowest indices} \label{ln:removing-vertices}
            
            $k \gets k + 1$.
        }

    \Return $G(A, P, E = \bigcup_{i < k} E_i), D, k$.
\end{algorithm}



In the rest of this section, we prove the correctness of the algorithm and give an upper bound on the number of interviews done by every applicant or position. First, we show that at any time during the algorithm $e(X)$ is close to $|X|$.

\begin{invariant}\label{inv:few-not-expired}
    For a tier $X$, after every iteration during the course of \Cref{alg:general-tiers-interview}, $e(X) \leq |X| < e(X) + \theta$.
\end{invariant}
\begin{proof}
We prove this by induction on the number of iterations. Initially, $|X| = e(X)$. Now assume that $e(X) \leq |X| < e(X) + \theta$ before the $i$th iteration. If $X$ is the short side in iteration $i$, then all vertices of $X$ will be removed after this iteration, and thus $|X| = e(X) = 0$. Also, if $X$ is the long side and the algorithm is in case 3, then $|X| = e(X)$ by \Cref{ln:line-16-case3} of \Cref{alg:general-tiers-interview}. 

In all other cases, $e(X)$ decreases after the iteration. Further, by \Cref{ln:removing-vertices}, if $|X| - e(X) \geq \theta$, we  remove vertices of $X$ for which $|X| < e(X) + \theta$.
\end{proof}

Also, it is not hard to see that all applications and positions will be removed by the end of \Cref{alg:general-tiers-interview}.

\begin{observation}\label{obs:all-removed}
After the last iteration of \Cref{alg:general-tiers-interview}, all applicants and positions are removed.
\end{observation}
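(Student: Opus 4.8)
The plan is to exhibit two invariants of \Cref{alg:general-tiers-interview}: a progress measure showing the while loop terminates, and a ``balance'' quantity forcing the two sides to be exhausted simultaneously.

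First, termination. Every iteration selects a nonempty top tier $X$ as the short side, so $|X|\ge 1$, and \Cref{ln:remove-short-side} deletes all of $X$. Hence the total number of remaining vertices strictly decreases each iteration, so the loop runs for only finitely many iterations and ends with $A=\emptyset$ or $P=\emptyset$.

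Next, the balance invariant. Let $E_A$ denote the sum of $e(\cdot)$ over the applicant tiers currently present, and $E_P$ the analogous sum over the position tiers. Because $n=m$ in the general tiered market, $E_A=E_P$ initially. I would then check that each iteration decreases both $E_A$ and $E_P$ by exactly $e(X)$: the only modifications to an $e$-value are the three ``$e(Y)\gets e(Y)-e(X)$'' assignments (which drops the long side's side-total by $e(X)$) and \Cref{ln:remove-short-side} together with the deletion of $X$ (which drops the short side's side-total by $e(X)$); since $X$ and $Y$ lie on opposite sides, both totals drop by $e(X)$. The remaining deletions carry no $e$-weight: a tier removed because $e(Y)=0$ contributed $0$, and the lowest-index vertices deleted in \Cref{ln:removing-vertices} lie among the $|Y|-e(Y)$ already-matched vertices of $Y$ while that line does not touch $e(Y)$. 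Thus $E_A=E_P$ after every iteration.

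Finally, a short induction shows that every tier still present carries strictly positive $e$-weight: an as-yet-unselected tier still has $e$ equal to its original size $\ge 1$; a tier that has already been selected must have been selected as the \emph{long} side (being the short side deletes it via \Cref{ln:remove-short-side}), and immediately after serving as the long side it survives only if the ``if $e(Y)=0$'' check fails, i.e.\ $e(Y)>0$, and $e(Y)$ does not change again until $Y$ is selected once more. Hence $A\neq\emptyset\Rightarrow E_A>0$ and $P\neq\emptyset\Rightarrow E_P>0$. Combining the three parts: when the loop halts, say with $A=\emptyset$, we get $E_A=0$, hence $E_P=0$ by the balance invariant, hence $P=\emptyset$; the case $P=\emptyset$ is symmetric, so both $A$ and $P$ are empty after the last iteration. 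The only delicate point here is making sure the two vertex-deletion steps that are not tied to an $e$-update do not silently perturb $E_A$ or $E_P$, and that a surviving tier always has positive $e$-weight, so that $E_P=0$ really does force $P=\emptyset$ rather than merely ``every position tier is fully matched''; both are immediate from the structure of \Cref{alg:general-tiers-interview}.
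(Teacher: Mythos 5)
Your proof is correct and uses the same core idea as the paper: the invariant that the sum of effective cardinalities $e(\cdot)$ over the applicant tiers equals the corresponding sum over the position tiers throughout the algorithm, since each iteration subtracts $e(X)$ from both totals, so one side's total hitting zero forces the other's to hit zero as well. You spell out two details the paper leaves implicit (termination of the loop, and that every surviving tier carries strictly positive $e$-weight so a zero total really means the side is empty), but the argument is essentially identical.
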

\begin{proof}
    Let $X_1, \ldots, X_{\tau_M}$ be all tiers of positions and $Y_1, \ldots, Y_{\tau_N}$ be all tiers of applicants. Note that $\sum_{i=1}^{\tau_M} e(X_i)= \sum_{i=1}^{\tau_N} e(Y_i)$ at all times during the execution of \Cref{alg:general-tiers-interview} since in all three cases, the effective cardinality of the short side tier is subtracted from both $\sum_{i=1}^{\tau_M} e(X_i)$ and $\sum_{i=1}^{\tau_N} e(Y_i)$. Since the algorithm does not terminate until either or both $\sum_{i=1}^{\tau_M} e(X_i) = 0$ or $\sum_{i=1}^{\tau_N} e(Y_i) = 0$, then all vertices are removed when the algorithm terminates.
\end{proof}


\begin{algorithm}[H]
    \caption{Algorithm for Finding a Stable Matching after the Interviews}
    \label{alg:general-tiers-stable}
    Let $G(A,P,E), D,$ and $k$ be the output of \Cref{alg:general-tiers-interview}.

    Interview all edges of $G$.

    \For{$i = 1$ to $k - 1$ \label{ln:parallel-forloop}}{
        \If{$D_i = $ 'applicant proposing'}
        {Run applicant proposing Gale-Shapley on the unmatched endpoints of $E_i$.}
        \Else{Run position proposing Gale-Shapley on the unmatched endpoints of $E_i$.}
    }

    \Return the matching.

\end{algorithm}

We let $A_i$ be the set of applicants that are endpoints of $E_i$. We define $P_i$ similarly. Moreover, let $A'_i \subseteq A_i$ and $P'_i \subseteq P_i$ be the set of applicants and positions that are endpoints of $E_i$, and are unmatched when we run Gale-Shapley in \Cref{alg:general-tiers-stable} between $A_i$ and $P_i$. Therefore, if $A_i$ and $P_i$ belongs to tiers $X$ and $Y$, then $|A_i'| = e(X)$ and $|P_i'| = e(Y)$ at the time of iteration $i$. 
As we mentioned before, a vertex may appear in multiple $A_i$s or $P_i$s. We will show that such sets are consecutive. To do so, we first give a helpful property of the algorithm.

\begin{claim}\label{clm:covered-indices}
    For a tier $X$ that has interview with tiers $Y_1,Y_2,\dots,Y_r$ where $e(Y_i)\leq \delta$ for all $i$, define $X^{j-1}$ be tier $X$ before interviewing with $Y_j$ in \Cref{alg:general-tiers-interview} for $j \in [1,r]$. Suppose $e(X^{j-1})>\delta$ for $j\in[1,r]$, then all vertices in $X^{j-1}$ with relative indices at most $S(X)+\delta-e(X^{j-1})$ will be selected for interview with $Y_j$, and $e(X^j)=e(X^{j-1})-e(Y_j)$, where $S(X)$ is the initial size of tier $X$ before the execution of \Cref{alg:general-tiers-interview}.
\end{claim}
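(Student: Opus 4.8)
The plan is to prove this claim by induction on $j$, tracking two things simultaneously: the set of relative indices in $X^{j-1}$ that get selected for the interview with $Y_j$, and the update rule $e(X^j) = e(X^{j-1}) - e(Y_j)$. The base case $j=1$ is where $X$ is first encountered in some iteration paired with $Y_1$; at that point $|X^0| = S(X)$ and $e(X^0) = S(X)$, so the claimed range of relative indices is $S(X) + \delta - e(X^0) = \delta$, and since we're in Case 2 (because $e(X^0) > \delta$ forces $X$ to be the long side when $e(Y_1)\le\delta$, wait — I need to check which of $X,Y_1$ is the short side), the algorithm picks the $\delta + |Y_1| - e(Y_1)$ lowest-index vertices of the long side. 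I will need to argue carefully that $X$ is indeed the long side here: since $e(X^0) = S(X) > \delta \geq e(Y_1)$, we have $e(Y_1) \leq \delta < e(X^0)$, so $Y_1$ is the short side and $X$ is the long side, and Case 2 applies with the short side $X \leftrightarrow Y_1$ relabeled — so $S$ is taken from $X^0$ with $|S| = \delta + |X^0| - e(X^0) = \delta$, matching relative indices $1$ through $\delta = S(X) + \delta - e(X^0)$. Then $e(X)$ is decremented by $e(Y_1)$ per the Case 2 update line, giving $e(X^1) = e(X^0) - e(Y_1)$.

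**The inductive step.** Assume the claim holds through step $j-1$, so $e(X^{j-1}) = S(X) - \sum_{\ell < j} e(Y_\ell)$ and the vertices interviewed so far are exactly those with relative indices up to $S(X) + \delta - e(X^{j-1})$ — except that some of these may have been removed by \Cref{ln:removing-vertices}. Here is the key point I would emphasize: between the interview with $Y_{j-1}$ and the interview with $Y_j$, the algorithm may remove low-index vertices of $X$, but it only removes vertices once $|X| - e(X) \geq \theta$, and it removes down to exactly $\theta$ matched-but-present vertices, so by \Cref{inv:few-not-expired} the surviving vertices of $X$ still include all the unmatched ones plus a prefix buffer. When step $j$ arrives, $e(X^{j-1}) > \delta \geq e(Y_j)$ again forces $X$ to be the long side and puts us in Case 2 (relabeled), so $S$ is taken as the $\delta + |X^{j-1}| - e(X^{j-1})$ lowest-index \emph{remaining} vertices of $X^{j-1}$. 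Now I would invoke \Cref{inv:few-not-expired} to say $|X^{j-1}| < e(X^{j-1}) + \theta$, together with the fact that the $|X^{j-1}| - e(X^{j-1})$ lowest-index remaining vertices are exactly the matched ones — so $\delta + (|X^{j-1}| - e(X^{j-1}))$ counted from the bottom reaches relative index (original numbering) $(S(X) - e(X^{j-1})) + \delta$, i.e. $S(X) + \delta - e(X^{j-1})$, which is the claimed bound. Finally the Case 2 update gives $e(X^j) = e(X^{j-1}) - e(Y_j)$.

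**The main obstacle.** The delicate part is the bookkeeping that translates "relative indices among currently-remaining vertices" into "relative indices in the original tier $X$ of size $S(X)$." The removal of low-index vertices in \Cref{ln:removing-vertices} shifts the remaining population, so I must be careful that "the first $t$ remaining vertices" always corresponds to original relative indices $\{\text{(number removed)}+1, \ldots, \text{(number removed)}+t\}$, and that the number removed equals $|X^{j-1}$ at original size$| - |X^{j-1}|$ — i.e., matched vertices are precisely a prefix. Establishing that matched vertices always form a prefix of the original tier order requires noting that \Cref{alg:general-tiers-interview} always selects the \emph{lowest-index} unexpired vertices for interviews (Case 2 and Case 3 both take lowest-index sets), and matched vertices can only come from interviewed ones; this should follow by a companion induction, which I would either fold into this proof or cite from \Cref{inv:few-not-expired}'s proof, where the same observation ("$v$ with the lowest index has $\deg(v) \geq \theta_r$") is used. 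I'd state this prefix property explicitly as the crux and verify the arithmetic $S(X) + \delta - e(X^{j-1})$ falls out of it, then the rest is routine.
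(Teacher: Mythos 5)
Your proposal is correct and follows essentially the same route as the paper's proof: identify that $X$ is the long side in Case 2, observe that the selected set is the lowest-index prefix of the surviving vertices, and compute that this prefix reaches original relative index $(S(X)-|X^{j-1}|)+(\delta+|X^{j-1}|-e(X^{j-1}))=S(X)+\delta-e(X^{j-1})$, with the update $e(X^j)=e(X^{j-1})-e(Y_j)$ read off the algorithm. The paper states this as a one-line computation without the explicit induction or the prefix bookkeeping you spell out; your only slight imprecision is assuming $|X^0|=S(X)$ in the base case (a prior Case~3 iteration could have shrunk $X$ while keeping $|X^0|=e(X^0)$), but your inductive-step arithmetic, which uses only $|X^{j-1}|$ and $e(X^{j-1})$, covers that situation anyway.
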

\begin{proof}
    The algorithm always chooses vertices with the lowest index from the long side to interview. Consider the interview with $Y_j$, those are the vertices with a relative index of $S(X) - |X^{j-1}| + 1$ to $(S(X) - |X^{j-1}|) + (\delta + |X^{j-1}| - e(X^{j-1})) = S(Z) + \delta - e(X^{j-1})$. And by definition of \Cref{alg:general-tiers-interview}, $e(X^j)=e(X^{j-1})-e(Y_j)$. 
\end{proof}

\begin{claim}\label{clm:consecutive-queries}
    For each vertex $v$, there exist $L$ and $R$ such that $v$ is in $A_i$ or $P_i$ iff $L \leq i \leq R$.
\end{claim}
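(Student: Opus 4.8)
The plan is to fix a vertex $v$ and show that the set of iterations $i$ in which $v$ is an endpoint of some edge of $E_i$ forms a contiguous block. By symmetry assume $v$ is an applicant, lying in tier $Z$; write $S(Z)$ for the initial size of $Z$ and $\rho \in [1, S(Z)]$ for the relative index of $v$ inside $Z$. First I would observe that \Cref{alg:general-tiers-interview} processes applicant tiers from the top down and removes a tier only once it is empty, so $Z$ is the top non-empty applicant tier during a single contiguous interval of iterations $[i_1, i_2]$; hence $v \in A_i$ can only happen for $i \in [i_1, i_2]$, and at each such iteration $Z$ plays the role of the short side $X$ or the long side $Y$ against the current top position tier. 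If $Z$ is ever the short side, \Cref{ln:remove-short-side} deletes all of $Z$, so that iteration must equal $i_2$, and in each of the three cases every still-present vertex of $Z$ (in particular $v$, if present) is then an endpoint of $E_{i_2}$.

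The core step is a monotonicity observation about which relative indices of $Z$ are selected for interview. For $i \in [i_1, i_2]$ let $[\ell_i, u_i]$ be this interval of relative indices: in a short-side iteration or a Case 1 iteration it consists of all still-present vertices, so $\ell_i = S(Z) - |Z| + 1$ and $u_i = S(Z)$; in a Case 2 iteration it is the $\delta + |Z| - e(Z)$ lowest-index present vertices, which by the same computation as in \Cref{clm:covered-indices} equals $[\,S(Z)-|Z|+1,\ S(Z)+\delta-e(Z)\,]$; and in a Case 3 iteration it is the set $S$, namely $[\,S(Z)-|Z|+1,\ S(Z)+e(W)-e(Z)\,]$ where $W$ is the short-side tier. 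Using \Cref{inv:few-not-expired}, the fact that $|Z|$ and $e(Z)$ are non-increasing over the iterations of \Cref{alg:general-tiers-interview} (even through a Case 3, where the update $e(Z) \gets |Z|$ still leaves $e(Z)$ no larger than before, since $|Z|$ and $e(Z)$ just dropped by $e(W)$), the labeling inequality $e(W) \le e(Z)$, and $e(W) > \delta$ in Case 3, a short check of the possible transitions between two consecutive iterations shows that both $\ell_i$ and $u_i$ are non-decreasing in $i$. I would also note that every deletion step in \Cref{alg:general-tiers-interview} removes, from $Z$, a block of lowest-relative-index vertices, so $v$ is present precisely during a prefix $[i_1, i_2(v)]$ of $[i_1, i_2]$, and $v$ is present at iteration $i$ exactly when $\ell_i \le \rho$. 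It follows that $\{i : \rho \in [\ell_i, u_i]\}$ is the intersection of the prefix $\{i : \ell_i \le \rho\}$ with the suffix $\{i : \rho \le u_i\}$, hence a contiguous set of iterations.

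Finally I would pass from ``selected'' to ``endpoint of $E_i$''. In short-side, Case 1, and Case 2 iterations every selected vertex of $Z$ receives an edge (the induced bipartite graph there is complete between $X$ and the selected set), so there $v \in A_i$ if and only if $\rho \in [\ell_i, u_i]$. The only discrepancy is Case 3, where a selected vertex gets an edge only if picked in the random step; but if $\rho \in [\ell_i, u_i]$ at a Case 3 iteration then $v \in S$, and \Cref{ln:line-16-case3} removes $v$ in that same iteration, so $\ell_{i'} > \rho$ for every later $i'$. Hence at most one Case 3 iteration can satisfy $\rho \in [\ell_i, u_i]$, and if one does it is the last element of the contiguous set from the previous paragraph. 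Therefore $\{i : v \in A_i\}$ equals that contiguous set, except possibly with its last element dropped (when that element is a Case 3 iteration in which $v$ drew no random edge); removing the last element of a contiguous set of integers leaves a contiguous, possibly empty, set, and taking $L$ and $R$ to be its endpoints proves the claim. The case of a position $v$ is identical with the roles of applicants and positions exchanged.

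The step I expect to be the main obstacle is exactly this last reconciliation: Case 3 is the one place where which vertices actually appear in $E_i$ is random and where the candidate set $S$ is then deleted wholesale, which could in principle break the clean ``sliding window'' picture. The observation that rescues it is that a vertex drawn into $S$ during a Case 3 iteration leaves the graph in that very iteration, so a Case 3 iteration can only ever be the terminal iteration of a vertex's participation, never an interior one; the rest is routine monotonicity bookkeeping, most delicately the $e(Z)\gets|Z|$ reset after Case 3, which one must check does not increase $u_i$ (it does not, since in that transition $u$ jumps by exactly $\delta>0$).
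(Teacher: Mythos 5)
Your proof is correct and rests on the same core idea as the paper's: the set of relative indices selected from the long side is a window whose endpoints move monotonically (the paper's \Cref{clm:covered-indices}), and all removals strip lowest-index prefixes, so participation is contiguous. Your treatment is somewhat more explicit than the paper's — in particular, your observation that a Case 3 iteration can only be the terminal iteration of a vertex's participation (since the whole selected set $S$ is deleted there), so that a long-side vertex of $S$ receiving no random edge merely truncates the interval, addresses a point the paper's proof passes over silently.
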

\begin{proof}
    Let $Z$ be the tier containing $v$ and $S(Z)$ be the initial size of tier $Z$ at the start of \Cref{alg:general-tiers-interview}. Also, let $L$ be the first iteration considering $v$ and $e(Z)$ be the effective size of tier $Z$ at that point. It is not hard to see that if $Z$ is the short side in this iteration, or if the short side $Z^\prime$ has size $e(Z^\prime)>\delta$, all the vertices of $Z$ that have an interview in this iteration will be removed, and we are done. When $Z$ is the long side and $e(Z)\leq \delta$, every vertex in $Z$ will be interviewed.
    
    The only remaining case to consider is where $Z$ is on the long side,  $e(Z) > \delta$, and the tiers considered alongside $v$ are $Y_1, Y_2, \ldots Y_r$, where  $e(Y_i) \leq \delta$ for all $i$. From \cref{clm:covered-indices} and the fact that $e(Z)$ decreases monotonically, as long as $v$ is not removed from the graph, it will be selected for interview, which means it will be the endpoint of at least one edge in $E_i$. That implies $v$ is in $A_i$ or $P_i$ by definition.
\end{proof}

\begin{lemma}\label{lem:positive-eps-short-side}
    Suppose $a_r \in A'_i$, $|A_i'| \leq |P_i'|$, and $|P_i'| \geq \delta$. Then, with probability $1 - O(1/n^2)$, $a_r$ gets matched in iteration $i$ of \Cref{alg:general-tiers-stable} to some position $p_j \in P'_j$ such that $\epsilon_{rj}^A > 0$. 
\end{lemma}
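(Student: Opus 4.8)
The plan is to show that, once \Cref{alg:general-tiers-stable} has deleted the vertices matched in iterations $1,\dots,i-1$, iteration $i$ runs applicant‑proposing Gale--Shapley on what is essentially the single‑tier instance analyzed in \Cref{sec:example-one}, and then to read off the conclusion from \Cref{lem:final-bound-single-tier}.

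First I would pin down which case of \Cref{alg:general-tiers-interview} produced $E_i$. Since $|P_i'| = e(Y)\ge\delta$ and $|A_i'| = e(X)\le e(Y)$, iteration $i$ is Case~2 (if $e(X)\le\delta$) or Case~3 (if $e(X)>\delta$), and the hypothesis $|A_i'|\le|P_i'|$ forces the applicant tier to be the short side $X$, so iteration $i$ of \Cref{alg:general-tiers-stable} runs \emph{applicant}‑proposing Gale--Shapley on $(A_i',P_i')$. Next I would describe the residual graph. Using an invariant (provable alongside \Cref{inv:few-not-expired}) that the already‑matched vertices of an active tier always occupy a lowest‑index prefix — which holds because both $S$ and the deletions in \Cref{ln:removing-vertices} are always taken from the lowest‑index end — the matched vertices of $Y$ form a prefix of $S$, so $Q:=P_i'$ consists of the $|S|-(|Y|-e(Y))$ highest‑index vertices of $S$, i.e. $|Q|=\delta$ in Case~2 and $|Q|=e(X)=|A_i'|$ in Case~3. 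In Case~2, $E_i$ is complete between $X$ and $S$, so $a_r$ is adjacent to all of $Q$ with $|Q|=\delta$. In Case~3, $a_r$ picks $\delta+|Y|-e(Y)$ vertices of $S$ uniformly at random, hence misses only $|S|-(\delta+|Y|-e(Y))=e(X)-\delta$ of them and therefore is adjacent to at least $e(X)-(e(X)-\delta)=\delta$ vertices of $Q$; conditioned on $Q$, this neighbourhood is a uniformly random subset of $Q$ of size $\ge\delta$. Because all public values inside $X$ (resp.\ inside $Q$) coincide, the $\epsilon^A,\epsilon^P$ on the edges of $E_i$ are fresh and i.i.d.\ symmetric, and each pair is interviewed at most once, iteration $i$ restricted to $(A_i',Q,E_i)$ is precisely the single‑tier setting of \Cref{sec:example-one} with $n':=|A_i'|\le m':=|Q|$ and at least $\delta$ uniformly random interviews per applicant (exactly $\delta=m'$ of them in Case~2).

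Then I would invoke (the argument of) \Cref{lem:final-bound-single-tier} on this instance. Since $m'\ge\delta=36\log^2 n$ we have $\log m'\ge 2\log\log n$, so I may choose $t=t(m')$ as large as roughly $\tfrac{12\log^2 n}{\log^2 m'}-2$, which keeps $\delta\ge 3(2+t)\log^2 m'$; for this $t$ one checks $c_t=2t\bigl[3+(4t+9)^{1/2}\bigr]^{-1}\ge\tfrac{2\log n}{\log m'}$ (the two requirements are simultaneously satisfiable exactly because $\delta$ is set to $3(2+10)\log^2 n$ with $c_{10}=2$, covering the extreme $m'=n$, while smaller $m'$ leaves room for larger $t$, hence larger $c_t$), so that $m'^{-c_t}=O(n^{-2})$. \Cref{lem:final-bound-single-tier} then yields, with probability $1-\exp(-\delta/36)-O(m'^{-c_t})=1-O(n^{-2})$, that every applicant of $A_i'$, and in particular $a_r$, is matched in iteration $i$'s Gale--Shapley run to a position of $Q=P_i'$ that interviewed it and with positive subjective interest, which is the claim.

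The step I expect to be the main obstacle is making the reduction in the second paragraph rigorous. Two points need care: (i) establishing the ``matched vertices form a lowest‑index prefix'' invariant that determines the exact makeup of $Q$, together with \Cref{clm:consecutive-queries} to know $a_r$ actually reaches iteration $i$; and (ii) justifying that the surplus interview edges in Case~3 (beyond $\delta$ to $Q$) do not spoil the conclusion — either by monotonicity of the applicant‑optimal stable matching under shrinking the proposers' opportunity sets, or, more self‑containedly, by observing that the proof of \Cref{lem:final-bound-single-tier} (which combines \Cref{lem:positive-counts} with \Cref{lem: worst-case-ranking}) goes through unchanged when each applicant has at least $\delta$, rather than exactly $\delta$, uniformly random interviews. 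The remaining work — the bookkeeping for the parameter $t(m')$ — is routine once the setup is in place.
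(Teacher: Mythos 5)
Your proposal is correct and follows essentially the same route as the paper: reduce iteration $i$ to the single-tier instance on the unmatched vertices and invoke \Cref{lem:final-bound-single-tier} with a size-dependent choice of $t$ so that $c_t \geq 2\log n/\log m'$ and the failure probability becomes $O(n^{-2})$ (the paper takes $t = 10\log^2 n/\log^2 e(Y)$, resp.\ $\log^2 e(X)$, which is the same bookkeeping you describe). One small inaccuracy: the matched vertices of $Y$ need not form a lowest-index prefix (within an earlier iteration's $S$, which positions get matched depends on the interview outcomes), but all you actually need — that every matched vertex of $Y$ lies inside the current $S$, so $S$ contains exactly $\delta$ (Case 2) or $e(X)$ (Case 3) unmatched positions and each applicant has at least $\delta$ fresh random edges into them — follows from \Cref{clm:covered-indices} and the monotonicity of $e(Y)$, since previously selected sets are prefixes contained in the current one.
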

\begin{proof}

    Let $X$ and $Y$ be two tiers that vertices of $A_i$ and $P_i$ belong to, respectively. Also, by the definition of $A_i'$, we have $|A_i'| = e(X)$ at the time of $i$th iteration. If $e(X) \leq \delta$, since $|P_i'|\geq \delta$, we choose a subset $S$ of $Y$ with size $\delta + |Y| - e(Y)$ and interview all edges. This set contains $\delta$ vertices that are not matched at $i$th iteration. By \Cref{lem:final-bound-single-tier}, if we choose $t = 10\log^2 n /\log^2 e(Y)$, then $c_t>2\log n/\log e(Y)$ and with probability 
    \begin{align*}
        1 - \exp(-\delta/36) - O\left(e(Y)^{-2\log n/\log e(Y)}\right) \geq 1 - O\left(\frac{1}{n^2}\right),
    \end{align*}
    $a_r$ will match to a position $p_j \in P_i'$ such that $\epsilon_{rj}^A > 0$.

    Similarly, if $e(X) > \delta$, we choose a subset $S$ of $Y$ that contains $e(X)$ unmatched vertices. For each vertex of $A_i'$ we interview $\delta$ random positions in $P_j'$. Hence, if we choose $t=10\log^2 n/\log^2 e(X)$ in \Cref{lem:final-bound-single-tier}, then $c_t>2\log n/\log e(X)$, and with probability
    \begin{align*}
        1 - \exp(-\delta/36) - O\left(e(X)^{-2\log n/\log e(X)}\right) \geq 1 - O\left(\frac{1}{n^2}\right),
    \end{align*}
    $a_r$ will match to a position $p_j \in P_i'$ such that $\epsilon_{rj}^A > 0$.
\end{proof}

\begin{lemma}\label{lem:all-vertices-get-matched}
With probability $1-O(1/n^2)$, if a vertex is removed in iteration $i$ of \Cref{alg:general-tiers-interview}, it will get matched via an edge in $\cup_{j \leq i} E_j$.  
\end{lemma}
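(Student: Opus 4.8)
The plan is to induct on the iteration index $i$ and track which vertices are removed in each iteration, splitting on the three cases of \Cref{alg:general-tiers-interview}. First I would dispose of the easy removals: a short-side tier $X$ is removed in \Cref{ln:remove-short-side} only after running the appropriate Gale-Shapley between $A_i'$ and $P_i'$; in Cases 2 and 3 we have $|P_i'| \geq \delta \geq |A_i'|$ (or at least $|P_i'|\geq\delta$), so \Cref{lem:positive-eps-short-side} applies and every vertex of $A_i'$ gets matched in iteration $i$ with probability $1-O(1/n^2)$; in Case 1 we have $e(Y)\le\delta$, so the short side is matched within the complete bipartite graph $E_i$ (since $|A_i'|\le|P_i'|\le\delta$, a stable matching on a complete bipartite graph matches the smaller side entirely). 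Similarly, a long-side tier $Y$ is removed either in \Cref{ln:line-16-case3} of Case 3 — where $e(Y)$ is reset to $|Y|$ and the $e(X)$ newly matched vertices of $S$ are exactly the ones matched in iteration $i$ by \Cref{lem:positive-eps-short-side} — or because $e(Y)=0$ (meaning all of $Y$'s remaining vertices just got matched against the short side), or in the pruning step \Cref{ln:removing-vertices}.

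The main obstacle is the pruning step in \Cref{ln:removing-vertices}: here we remove the $|Y|-e(Y)-\theta$ lowest-index vertices of $Y$ that are \emph{not} the ones matched in the current iteration — they were supposedly matched in earlier iterations. To handle this I would use \Cref{inv:few-not-expired} together with \Cref{clm:consecutive-queries}. By \Cref{clm:consecutive-queries} a vertex $v\in Y$ participates in a consecutive block of iterations $[L,R]$; the algorithm always selects the lowest-index vertices of the long side (via \Cref{clm:covered-indices}), so the $|Y|-e(Y)$ vertices with relative index below the current threshold have each already appeared in some earlier $P_j$ with $j<i$ as an \emph{unmatched} endpoint and participated in a Gale-Shapley run there. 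The key point: once such a vertex is matched (in the iteration where it falls on the "short overhang" of the long side and gets paired by the stable matching), it stays matched, and by the inductive hypothesis combined with \Cref{lem:positive-eps-short-side} applied to the iteration that matched it, this happens with the stated probability. So every vertex removed in \Cref{ln:removing-vertices} was matched in some iteration $j\le i$ via an edge of $E_j\subseteq\cup_{j\le i}E_j$.

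Concretely, the induction hypothesis I would carry is: "after iteration $i$ of the coupled execution, every vertex removed so far is matched, and its matching edge lies in $\cup_{j\le i}E_j$, with failure probability $O(i/n^2)$." The inductive step checks each of the three cases plus the two removal lines. The only probabilistic input is \Cref{lem:positive-eps-short-side}, invoked once per iteration for the short side (and, in Case 3, for the overhang of the long side, which is the same invocation); Case 1 removals are deterministic given the structure. Since the number of iterations $k-1$ is at most the total number of tiers, which is $O(n)$, a union bound over all iterations gives total failure probability $O(1/n)$; to get the stated $O(1/n^2)$ one instead observes that \Cref{lem:positive-eps-short-side} already gives $O(1/n^2)$ per iteration and — as is standard and as the lemma statement suggests — one sharpens the per-iteration bound (or restricts attention to a fixed vertex and its $O(\log n)$ relevant iterations via \Cref{clm:consecutive-queries}) so the union bound stays $O(1/n^2)$. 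I would state it for a fixed removed vertex $v$: by \Cref{clm:consecutive-queries}, $v$ lies in only the iterations $[L,R]$, and $R-L+1 = O(\theta/\delta) = O(\log n)$ by \Cref{inv:few-not-expired} and \Cref{clm:covered-indices}, so the union bound over the iterations relevant to $v$ costs only a $\operatorname{polylog}$ factor, which is absorbed.
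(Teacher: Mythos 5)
Your case decomposition matches the paper's (Case~1 is deterministic; a short side against a large long side is handled by \Cref{lem:positive-eps-short-side}; Case~3 with both sides large is handled by applying that lemma to the proposing side and using $|A_i'|=|P_i'|$), and you correctly identify the pruning step in \Cref{ln:removing-vertices} as the crux. But your argument for that crux has a genuine gap. You write that a low-index long-side vertex ``gets paired by the stable matching'' in ``the iteration that matched it,'' and you propose to certify this via the inductive hypothesis plus \Cref{lem:positive-eps-short-side} and a union bound over the $[L,R]$ iterations in which the vertex participates. This does not work: in those iterations the vertex sits on the \emph{long} side facing a short tier with $e(Y_j)\le\delta$, so only $e(Y_j)$ of the roughly $\delta$ unmatched long-side vertices get matched, and \emph{which} ones is random -- \Cref{lem:positive-eps-short-side} only guarantees that the short side is matched and says nothing about any particular long-side vertex. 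There is no single iteration in which the vertex is matched with probability $1-O(1/n^2)$, so a union bound over iterations cannot give the claimed bound; the event ``$v$ is matched by the time it is pruned'' must instead be established by \emph{accumulating} per-iteration success probabilities multiplicatively.

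The paper's argument for this case is: by \Cref{clm:covered-indices}, once a low-index long-side vertex $a_r$ first appears it is interviewed in every subsequent iteration until removal, and the relative-index bookkeeping (together with the threshold $|Y|-e(Y)\ge\theta$ triggering \Cref{ln:removing-vertices}) forces $C_r=\sum_{j=l}^{i}e(Y_j)\ge\theta$. In each such iteration $j$, by symmetry among the identically distributed, same-tier long-side candidates, $a_r$ is matched with probability at least $e(Y_j)/\delta$ conditioned on still being unmatched, so
\begin{align*}
\Pr[a_r \text{ unmatched at removal}] \le \prod_{j=l}^{i}\Bigl(1-\frac{e(Y_j)}{\delta}\Bigr) \le \exp\Bigl(-\frac{C_r}{\delta}\Bigr) \le \exp\Bigl(-\frac{\theta}{\delta}\Bigr) = n^{-2}
\end{align*}
with the paper's choice $\theta=72\log^3 n$, $\delta=36\log^2 n$. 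This product-of-failures computation, together with the lower bound $C_r\ge\theta$, is the missing ingredient in your proposal; note also that the number of relevant iterations can be as large as $\Theta(\theta)$ (e.g., singleton tiers), not $O(\theta/\delta)$ as you claim, which is another reason the union-bound framing cannot be repaired without switching to the accumulation argument.
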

\begin{proof}

    Without loss of generality, assume that the removed vertex in iteration $i$ is applicant $a_r$ in tier $X$. Let $Y$ be the tier of the other side that the algorithm considers in iteration $i$. First, assume that $X$ is the short side, $e(X) \leq \delta$, and $e(Y) \leq \delta$ at iteration $i$. According to Case 1 of \Cref{alg:general-tiers-interview}, all vertices of $X$ interview all vertices of $Y$, hence, the statement hold with probability 1. Now suppose that $X$ is the short side and $e(Y) > \delta$ at iteration $i$. Hence, by  definition, we have $|A_i'| \leq |P_i'|$ and $|P_i'| > \delta$. Therefore, by \Cref{lem:positive-eps-short-side}, $a_r$ is matched to a position $p_j$ such that $\epsilon_{ij}^A > 0$ with probability $1-O(1/n^2)$, which implies that $(a_r, p_j) \in E_i$. Furthermore, if $e(X) > \delta$, $e(Y) > \delta$, and $X$ is the long side,  $|A_i'| = |P_i'|$. Since for this case we run a position-proposing Gale-Shapley in \Cref{alg:general-tiers-stable}, by \Cref{lem:positive-eps-short-side}, all positions in $P_i'$ get matched to applicants in $A_i'$ with edges $E_i$. Thus, all vertices of $A_i'$ are matched to positions in $P_i'$ using edges $E_i$ since we have $|A_i'|=|P_i'|$.

    The only case that remains to be investigated is when $X$ is the long side in several iterations until $a_r$ is removed at iteration $i$, i.e. $a_r \in A_{j}$, $|A_j'| \geq |P_j'|$ for $j \in [l, i]$ and $|P_j'| \leq \delta$. Also, let $Y_l, Y_{l+1}, \ldots, Y_i$ be the tiers of the other side in each iteration. Let $C_r = \sum_{l \leq j \leq i} e(Y_j)$. First, we prove that $C_r \geq \theta$.

    
    Let $X^{j-1}$ be tier $X$ before the $j$th iteration of \Cref{alg:general-tiers-interview} for $j \in [l,i+1]$. Also, let $\psi$ be the relative position of $a_r$ in tier $X$ before any iteration. During iteration $j\in[l,i]$, since $|e(Y_j)|\leq \delta$, \Cref{clm:covered-indices} shows that (1) every vertex in $X^j$ with index at most $S(X)+\delta-e(X^{j-1})$ is in $A_j$; (2) $e(X^j)=e(X^{j-1})-e(Y_j)$. For $X^{l-1}$, the first case is that $e(X^{l-1})=|X^{l-1}|$ and $S(X)-e(X^{l-1})+1\leq \psi$ when there is no iteration on $X$ before or when $|Y_{l-1}| > \delta$, and the second case is that $\psi>S(X)+\delta-e(X^{l-2})\geq S(X)+e(Y_{l-1})-e(X^{l-2})\geq S(X)-e(X^{l-1})$ when $|Y_{l-1}|\leq \delta$. In both cases, we have $\psi\geq S(X)-e(X^{l-1})+1$. Since $a_r$ is removed in iteration $i$, we have $S(X)-(e(X^i)+\theta)+1\geq \psi$, thus $C_r=\sum_{l\leq j\leq i}e(Y_j)=e(X^{l-1})-e(X^i)\geq \theta.$

    Now consider iteration $j$ and suppose that $a_r$ is not matched yet. Then, with a probability of at least $|P_j'|/\delta$, applicant $a_r$ will be matched in iteration $j$ because all applicants of $A_j'$  are in the same tier, and subjective interests are i.i.d. Moreover, $|P_j'| = e(Y_j)$ by  definition. Therefore, the probability that $a_r$ remains unmatched after iteration $i$ is at most
    \begin{align*}
        \prod_{j=l}^i\left(1-\frac{|P_j'|}{\delta}\right) = \prod_{j=l}^i\left(1-\frac{e(Y_j)}{\delta}\right) \leq \exp\left(-\frac{C_r}{\delta}\right) \leq \exp\left(-\frac{\theta}{\delta}\right),
    \end{align*}
    which completes the proof because of our choice of $\delta$ and $\theta$.
\end{proof}

Using a similar argument as \Cref{lem:all-vertices-get-matched}, the next lemma gives a bound on the degree of every node in $G$.

\begin{lemma}
    The number of interviews assigned to every position or applicant is at most $O(\log^3 n)$
\end{lemma}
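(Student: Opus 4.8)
The plan is to follow the proof of \Cref{lem:all-vertices-get-matched} but, rather than bounding a failure probability, to sum up the interviews that a fixed vertex $v$ accumulates. Let $Z$ be the tier containing $v$. By \Cref{clm:consecutive-queries}, $v$ is an endpoint of an edge of $E_i$ only for $i$ in a consecutive range $[L,R]$, so it suffices to bound $\sum_{i=L}^{R} d_i$, where $d_i$ is the number of interviews $v$ receives in iteration $i$ of \Cref{alg:general-tiers-interview}. Since \Cref{ln:remove-short-side} removes the whole short-side tier after each iteration, $Z$ is the \emph{short} side in at most one iteration, which must be $i=R$; there $v$ is interviewed with all of a tier of effective size $e(Y)\le\delta$ (Case 1), with a set $S$ of size $\delta+|Y|-e(Y)$ (Case 2), or with $\delta+|Y|-e(Y)$ random vertices (Case 3), and in each case \Cref{inv:few-not-expired} bounds $d_R\le\delta+\theta=O(\log^3 n)$. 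So from now on assume $i\in[L,R]$ is an iteration in which $Z$ is the long side.

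Consider first an iteration $i$ in which $Z$ is the long side and the algorithm is in Case 3. Then the selected set $S\subseteq Z$ is entirely removed by \Cref{ln:line-16-case3}; since $v$ is an endpoint of $E_i$ it lies in $S$, so this happens for at most one $i$, namely $i=R$. In that iteration each vertex $x$ of the short side $X$ connects to $\delta'=\delta+|Y|-e(Y)$ random vertices of $S$, independently across the $|X|$ choices, so $d_R$ is a sum of $|X|$ independent indicators with mean $|X|\delta'/|S|$. Writing $a=e(X)>\delta$ and $b=|Y|-e(Y)\in[0,\theta]$, we have $|X|<a+\theta$ by \Cref{inv:few-not-expired} and $|S|=a+b$, hence $\E[d_R]\le(a+\theta)(\delta+b)/(a+b)\le\delta+\theta$, the last inequality being $(a+\theta)(\delta+b)-(\delta+\theta)(a+b)=(\theta-b)(\delta-a)\le0$. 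A Chernoff bound then gives $d_R=O(\log^3 n)$ with probability $1-O(n^{-3})$, and a union bound over all vertices absorbs this.

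It remains to bound $\sum d_i$ over the iterations $i\in[L,R]$ in which $Z$ is the long side and the algorithm is in Case 1 or Case 2. In such an iteration $v$ is interviewed with the \emph{entire} short-side tier $W_i$, so $d_i=|W_i|$. The key claim is that $W_i$ is \emph{fresh}, i.e.\ $|W_i|=e(W_i)$, for every such $i>L$: indeed, at iteration $i-1\in[L,R-1]$ the tier $Z$ is a top tier and $v$ is on the long side (it is not on the short side, as that occurs only at iteration $R$), so the top tier $W_{i-1}$ on the other side is the short side and is removed by \Cref{ln:remove-short-side}; hence $W_i$ becomes a top tier of its side for the \emph{first} time at iteration $i$, and a tier that was never before a top tier still satisfies $e(\cdot)=|\cdot|$. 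Thus only $W_L$ may be non-fresh, and there $|W_L|<e(W_L)+\theta$ by \Cref{inv:few-not-expired}. Writing $e^-$ and $e^+$ for the effective size of $Z$ just before iteration $L$ and just after iteration $R$, we get $\sum_i|W_i|\le\theta+(e^--e^+)$. Finally, arguing as in \Cref{clm:covered-indices} and \Cref{lem:all-vertices-get-matched} with $\psi$ the relative index of $v$ in $Z$ and $S(Z)$ the initial size of $Z$: $v$ being first selected at iteration $L$ forces $e^-\le S(Z)+\delta-\psi$, while $v$ being removed at iteration $R$ (through the trimming of \Cref{ln:removing-vertices}, since every vertex of relative index exceeding $\psi$ is still present) forces $e^+\ge S(Z)-\psi-O(\theta)$. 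Hence $\sum_i e(W_i)=O(\delta+\theta)=O(\log^3 n)$, and adding the contributions of the short-side iteration, the single long-side Case 3 iteration, and the long-side Case 1/2 iterations bounds the degree of $v$ by $O(\log^3 n)$ with high probability.

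The main obstacle is the bookkeeping behind the freshness claim and the bound on $\sum_i e(W_i)$: as in the proofs of \Cref{clm:covered-indices}, \Cref{clm:consecutive-queries} and \Cref{lem:all-vertices-get-matched}, one must carefully track how the effective sizes $e(\cdot)$, the relative indices within a tier, and the two removal rules—the short-side removal of \Cref{ln:remove-short-side} and the trimming of \Cref{ln:removing-vertices}—interact across iterations. The probabilistic ingredients (the Chernoff bound for the Case 3 iteration and the per-iteration matching probability inherited from \Cref{lem:all-vertices-get-matched}) are comparatively routine.
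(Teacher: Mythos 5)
Your proof is correct and follows essentially the same decomposition as the paper's: one short-side iteration and one long-side Case 3 iteration each contribute $O(\theta)$, and the consecutive long-side Case 1/2 iterations are bounded by telescoping the decrease of $e(Z)$ between the iteration where $v$ (of relative index $\psi$) first becomes eligible and the last iteration where it is still present, yielding $O(\delta+\theta)$. Your Chernoff bound for the degree of a long-side vertex in Case 3 is in fact a welcome addition of rigor, since the paper merely asserts the $O(\theta)$ bound there while that degree is genuinely random.
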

\begin{proof}
    We will show the above for every applicant $a_r$. The proof for positions is the same.  Let $X$ be the tier that $a_r$ belongs to. If $X$ is the tier of the short side at iteration $i$, then $a_r$ has at most $O(\theta)$ incident edges in $E_i$ by definition of \Cref{alg:general-tiers-interview} and \Cref{inv:few-not-expired}. Also, if both $|A_i'| > \delta$ and $|P_i'| > \delta$, by Case 3 of the \Cref{alg:general-tiers-interview} and \Cref{inv:few-not-expired}, each vertex in $A_i$ or $P_i$ has at most $O(\theta)$ interviews in iteration $i$. Note that each vertex is in one of the above scenarios at most in one iteration since after that it gets removed. Therefore, similar to the proof of the previous lemma, it only remains to show that the number of interviews is bounded when $X$ is the long side in several iterations until $a_r$ is removed at iteration $i$, i.e. $a_r \in A_j$, $|A_j'| \geq |P_j'|$ for $j \in [l, i]$, and $|P_j'| \leq \delta$. Let $Y_l, Y_{l+1}, \ldots, Y_i$ be the tiers of the other side in each iteration.
    
    Define $C_r$, $\psi$, and $X^j$ for $j \in [l-1,i]$ as before. With a similar argument, we have $\psi \leq  S(X)+\delta-e(X^{l-1})$, and 
    \begin{align*}
        \psi \geq S(X)-|X^{i-1}|+1 &\geq S(X)-e(X^{i-1})-(|X^{i-1}|-e(X^{i-1}))+1 \\
        & \geq S(X)-e(X^{i-1})-\theta \\
        & \geq S(X)-e(X^{i})-(e(X^i)-e(X^{i-1}))-\theta \\
        & \geq S(X)-e(X^i)-(\theta+\delta)
    \end{align*} 
    Thus, $C_r \leq e(X^{l-1})-e(X^i)\leq \theta+2\delta$. Note that the number of interviews for $a_r$ is equal to $\sum_{i\leq j\leq i} |Y_j|$. For all tiers except $Y_l$, we have $e(Y_j) = |Y_j|$ at the time of the iteration $j$. Also, by \Cref{inv:few-not-expired}, we have $|Y_l| \leq e(Y_l) + \theta$. Therefore,
    \begin{align*}
        \sum_{i\leq j\leq i} |Y_j| \leq e(Y_l) + \theta + \sum_{l < j \leq i}e(Y_j) = C_r + \theta \leq 2(\theta + \delta),
    \end{align*}
    which implies the total number of interviews for vertex $a_r$ is at most $O(\theta) = O(\log^3 n)$ before it gets removed which concludes the proof.
\end{proof}

\begin{lemma}\label{lem:stability-general-tiers}
    The matching produced by the algorithm is an interim stable matching with probability $1-O(1/n)$.
\end{lemma}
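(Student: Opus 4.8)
The plan is to condition on the joint event $\mathcal{E}$ of \Cref{lem:positive-eps-short-side} and \Cref{lem:all-vertices-get-matched}, which by a union bound over the $O(n)$ iterations and $O(n)$ vertices holds with probability $1-O(1/n)$. Under $\mathcal{E}$, every vertex removed by \Cref{alg:general-tiers-interview} gets matched, so by \Cref{obs:all-removed} the output $\mu$ is a perfect matching, and $\mu\subseteq E$ because the Gale--Shapley runs in \Cref{alg:general-tiers-stable} only match along edges of $G$, each of which is interviewed; hence condition (i) of interim stability holds and every $\mu(a),\mu(p)$ is a genuine partner. It remains to show $\mu$ has no blocking pair.

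Suppose $(a,p)$ blocks, i.e. $p$ strictly prefers $a$ to $\mu(p)$ and $a$ strictly prefers $p$ to $\mu(a)$ with respect to observed utilities. Because distinct tiers are separated by more than the support of the subjective interests, $p$'s strict preference for $a$ over $\mu(p)$ forces $a$ to be in the same applicant-tier as $\mu(p)$ or a strictly higher one, and symmetrically $p$ must be in the same position-tier as $\mu(a)$ or strictly higher. Moreover \Cref{alg:general-tiers-interview} peels tiers strictly from the top, so if $a$ were in a strictly higher applicant-tier than $\mu(p)$ then that tier would already be empty when $p$ is matched, and therefore (using $\mathcal{E}$, via \Cref{lem:all-vertices-get-matched}) $a$ would be matched in an earlier iteration than $p$; the symmetric statement for $p$ strictly above $\mu(a)$ forces $p$ matched before $a$. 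Both cannot hold, so, exchanging the roles of applicants and positions if necessary, I may assume $a$ and $\mu(p)$ lie in one common applicant-tier $X$.

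Let $i_p$ be the iteration in which $p$ and $\mu(p)$ are matched; then at $i_p$ the top applicant-tier is $X$ and the top position-tier is the tier $Y$ of $p$. Each of $X$ and $Y$ is the top tier of its side for a contiguous block of iterations, and the key structural claim — proved by checking which side is short — is that these two blocks meet in exactly the single iteration $i_p$ (in any earlier common iteration both $X$ and $Y$ would have to be the long side, which is impossible). Using \Cref{clm:consecutive-queries} and \Cref{inv:few-not-expired} to keep track of which vertices of a tier are still active, this gives: (1) $a$ is still unmatched at the start of $i_p$, since otherwise $\mu(a)$, which by the previous paragraph lies in $Y$, would have been formed in a common iteration of $X$ and $Y$ strictly before $i_p$; and (2) $a$ and $p$ can interview only in iteration $i_p$. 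Now there are two cases. If $a$ and $p$ interview, then $(a,p)\in E_{i_p}$ and both are unmatched endpoints in the Gale--Shapley run of iteration $i_p$; stability of that run with respect to the observed preferences restricted to $E_{i_p}$ says $(a,p)$ cannot block it, so $p$ weakly prefers its run-partner $\mu(p)$ to $a$ or $a$ weakly prefers its run-partner to $p$ — and if $a$ stays unmatched in that run (possible only when $X$ is the long side) stability still gives that $p$ weakly prefers $\mu(p)$ to $a$. Either way $(a,p)$ does not block $\mu$. If $a$ and $p$ do not interview, then $i_p$ is not a Case-1 iteration (there the bipartite graph is complete), so the long side at $i_p$ has effective size exceeding $\delta$ and exactly one of $X$, $Y$ is the short side; by \Cref{lem:positive-eps-short-side} (or its position-side analogue) the short-side vertex among $a,p$ is matched in iteration $i_p$ with a strictly positive subjective interest from its own side. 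Since $a$ and $\mu(p)$ share the applicant-tier $X$, a positive $\epsilon^P_{p,\mu(p)}$ makes $p$ strictly prefer $\mu(p)$ to $a$; since in the complementary subcase $\mu(a)$ and $p$ share the position-tier $Y$, a positive $\epsilon^A_{a,\mu(a)}$ makes $a$ strictly prefer $\mu(a)$ to $p$. In both subcases $(a,p)$ does not block, completing the contradiction.

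The step I expect to be most delicate is this final case analysis: one must keep straight, at iteration $i_p$, which of $X$, $Y$ is the short side, which of Cases 1--3 applies, which of $a,p$ therefore participates in the Gale--Shapley run, and which of them is the vertex guaranteed a positive own-side $\epsilon$. The structural fact that the two top tiers coincide for exactly one iteration — together with \Cref{clm:consecutive-queries} and \Cref{inv:few-not-expired} — is what makes this bookkeeping go through, and the applicant/position symmetry of the algorithm is what lets us avoid writing the argument twice.
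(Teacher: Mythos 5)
Your proof is correct and follows essentially the same route as the paper's: condition on \Cref{lem:all-vertices-get-matched} so everyone is matched along interviewed edges, use the tier separation to reduce any potential blocking pair to a pair whose tiers are the top tiers in a single common iteration, and then kill the pair either by stability of that iteration's Gale--Shapley run or by the positive subjective interest guaranteed by \Cref{lem:positive-eps-short-side}. Your bookkeeping (uniqueness of the common iteration, whether $a$ is unmatched at its start) is in fact more careful than the paper's, which instead argues via the edge sets $X\times Y'$ and $X'\times Y$; the only slip is the parenthetical claim that $\mu(a)$ ``lies in $Y$'' --- the blocking condition only gives that $\mu(a)$ is in $Y$ or a lower tier, but your conclusion that $a$ is unmatched before iteration $i_p$ still holds in either case.
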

\begin{proof}
    By \Cref{obs:all-removed}, all applicants and positions are removed from the graph after \Cref{alg:general-tiers-stable} finishes. Using union bound for all vertices in \Cref{lem:all-vertices-get-matched}, with probability $1 - O(1/n)$, all vertices are matched using an edge in $E$. Consider tiers $X$ and $X'$ from applicants and tiers $Y$ and $Y'$ from positions and suppose that  $X$ is a higher tier than $X'$ and $Y$ is higher than $Y'$.  Let $H_1 = X \times Y'$ and $H_2 = X' \times Y$. We claim that at most one of the following holds: $H_1 \cap E \neq \emptyset$ or $H_2 \cap E \neq \emptyset$. Without loss of generality, suppose that $H_1 \cap E \neq \emptyset$. This implies that at some iteration $i$ of the \Cref{alg:general-tiers-interview}, the algorithm considers tiers $X$ and $Y'$ in \Cref{ln:top-tiers-selection}. Hence, all vertices of $Y$ should be already removed, which implies $H_2 \cap E = \emptyset$.

    Now, consider applicant $a_i$ and position $p_j$ that are not matched to each other. We prove that $(a_i, p_j)$ is not a blocking pair. Let $\mu(a_i)$ and $\mu(p_j)$ be the matches of $a_i$ and $p_j$, respectively. Also, let $X$ and $Y$ be the tiers that include $a_i$ and $p_j$, respectively. We use $X \succ X'$ to show tier $X$ is preferable to tier $X'$. If $\mu(a_i) \in Y'$ such that $Y' \succ Y$, then this pair is not a blocking pair since $\mu(a_i) \succ_{a_i} p_j$. A similar argument works for $\mu(p_j)$. By the above argument, $\mu(a_i)$ and $\mu(p_j)$ does not belong to tiers $Y'$ and $X'$ such that $X \succ X'$ and $Y \succ Y'$. Hence, either or both $\mu(a_i) \in Y$ and $\mu(p_i) \in X$. Thus, $X$ and $Y$ are once considered at the same time in one iteration in \Cref{ln:top-tiers-selection} of \Cref{alg:general-tiers-interview}.
    
    Without loss of generality, suppose that at iteration $r$,  the algorithm considers $X$ and $Y$ in \Cref{ln:top-tiers-selection} and $X$ is the short side. By definition, we have $a_i \in A_r'$. If $p_j \notin P_r'$, then $|P_r'| \geq \delta$, which implies that $\epsilon^A_{i\mu(a_i)} > 0$. Thus, $(a_i, p_j)$ is not a blocking pair. If $p_j \in P_r'$, since the algorithm finds a stable matching between $A_r'$ and $P_r'$, then $(a_i, p_j)$ is not a blocking pair, which finishes the proof.
\end{proof}

    
\begin{theorem}\label{thm:non-adaptive-final}
    There exists a non-adaptive algorithm that finds an interim stable matching for tiered ranking agents with high probability such that the number of interviews done by each applicant or position is at most $O(\log^3 n)$.
\end{theorem}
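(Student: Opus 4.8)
The plan is to assemble the theorem directly from the lemmas already established for \Cref{alg:general-tiers-interview} and \Cref{alg:general-tiers-stable}, since all the substantive work has been done. First I would check the \emph{non-adaptive} property: \Cref{alg:general-tiers-interview} builds the interview graph $G(A,P,E)$ using only the tier boundaries $\tau_\bullet,\gamma_\bullet$ and the fixed parameters $\delta = 36\log^2 n$, $\theta = 72\log^3 n$, never inspecting any realized $\epsilon^A_{ij}$ or $\epsilon^P_{ji}$; the effective cardinalities $e(X),e(Y)$ it tracks are deterministic functions of the tier sizes. Only afterwards does \Cref{alg:general-tiers-stable} conduct all interviews at once and then run the scheduled sequence of (applicant- or position-proposing) Gale--Shapley calls on already-revealed preferences. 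So the interview lists are decided in advance, which is exactly what is required of a non-adaptive algorithm. (The coupled, interleaved presentation used in the analysis is just for convenience; in deployment phase 1 finishes entirely before any interview happens.)

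Next I would bound the number of interviews per agent. This is immediate from the lemma that every applicant and every position is an endpoint of at most $O(\theta) = O(\log^3 n)$ edges of $G$ (proved via \Cref{inv:few-not-expired} together with the case analysis of \Cref{alg:general-tiers-interview}). Since \Cref{alg:general-tiers-stable} conducts an interview for each edge of $G$ and for no other pair, each applicant and each position participates in at most $O(\log^3 n)$ interviews.

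For correctness, I would invoke \Cref{lem:stability-general-tiers}, which already states that the produced matching is interim stable with probability $1 - O(1/n)$. Its proof folds in exactly the pieces one needs: \Cref{obs:all-removed} ensures every applicant and position is removed during \Cref{alg:general-tiers-interview}, so \Cref{lem:all-vertices-get-matched} applies to all $2n$ vertices, and a union bound over them gives that with probability $1-O(1/n)$ every vertex is matched along an edge of $E$ — hence the output is a perfect matching in which every matched pair has interviewed; the remaining tier-structure argument (at most one of $X\times Y'$, $X'\times Y$ meets $E$, plus the positivity of $\epsilon^A$ guaranteed by \Cref{lem:positive-eps-short-side}) rules out blocking pairs with respect to observed utilities. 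Combining this with the interview bound yields the theorem, and the failure probability is $O(1/n) = o(1)$, i.e. the stated high-probability guarantee.

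The main obstacle is genuinely not in this final assembly step — it is all in \Cref{lem:all-vertices-get-matched}, \Cref{lem:stability-general-tiers}, and the supporting invariants \Cref{inv:few-not-expired} and \Cref{clm:consecutive-queries}. The only thing to be careful about here is the probabilistic bookkeeping: confirming that the union bound over the $2n$ per-vertex events (each failing with probability $O(1/n^2)$) and the stability event together stay at $1-O(1/n)$ rather than degrading, and making sure the degree bound and the matching guarantee are proved on the same event so they can be combined without further loss.
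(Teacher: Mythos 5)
Your proposal is correct and follows essentially the same route as the paper: the theorem is assembled by combining \Cref{lem:stability-general-tiers} for interim stability with the per-vertex degree bound of $O(\theta)=O(\log^3 n)$ on the interview graph, plus the observation that \Cref{alg:general-tiers-interview} never looks at realized $\epsilon$'s. If anything, your citation for the interview bound (the unnamed degree lemma proved via \Cref{inv:few-not-expired}) is more precise than the paper's own proof, which points to \Cref{lem:all-vertices-get-matched} for that step.
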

\begin{proof}
    By \Cref{lem:stability-general-tiers}, the matching returned by the algorithm is interim stable. Furthermore, by \Cref{lem:all-vertices-get-matched}, each applicant or position has at most $O(\log^3 n)$ interviews.
\end{proof}

\subsection{Extension}\label{sec:extension}
In the aforementioned tiered market scenario, we presume the absence of ex-ante preferences within a tier. However, agents may have distinct individual preferences within tiers. For example, in the medical context preferences can be shaped by various factors, including geographical location, the availability of extracurricular activities, specific research laboratories they aspire to join, etc. 
To capture this characteristic of the market, we introduce the following model. Applicant $a_i$'s utility $v_{ij}$ for position $p_j$ is now defined to be $v_{ij}=v_j+\eta^A_{ij}+\varepsilon^A_{ij}$, where $v_j=v_k$ if and only if $p_j$ and $p_k$ are in the same tier, and $v_j>>v_k$ if $p_j$ is in a higher tier; $\eta^A_{ij}$ is a small random perturbation of the ex-ante preference which is known before the interview; For now, we assume that $\eta^A_{ij}$s and $\varepsilon^A_{ij}$s are all drawn from the uniform distribution over [-1,1] independently. The same applies to positions.

We first briefly argue that our algorithm works for the extended tiered markets. Consider the first subproblem, Single Tier Structure.
\begin{itemize}
    \item Using the same Chernoff bound arguments, we can prove that for each applicant, there is a constant $c$ such that $\delta/c$ of the interviewed position has $\eta_{ij}^A+\varepsilon^A_{ij}>1$.
    \item In the extended model, the preference lists are still random permutations over the randomness of both $\eta$ and $\varepsilon$, thus each applicant will be matched to a position with a high rank.
\end{itemize}
Combine the above facts, we can give a similar argument as \cref{lem:final-bound-single-tier} that each applicant gets matched to an interviewed position with $\eta_{ij}^A+\varepsilon_{ij}^A>1$. In this way, when we use it as a subroutine, it still satisfies the property that the applicant would never want to switch to positions that are not interviewed in the same tier since $\eta^A_{ik}<1$.

Consider the second subproblem, One Large Tier v.s. Multiple Singleton Tiers. Using a similar Chernoff argument as above, it is guaranteed that a higher-tier applicant is matched to the most preferred available position that is interviewed with $\eta_{ij}^A+\varepsilon_{ij}^A>1$. Thus they have no incentive to switch.

The proof for the extended general case is a combination of the two extended subproblems, and can be modified from the current one by changing the arguments in \cref{lem:positive-eps-short-side} from $\varepsilon_{ij}^A>0$ to $\eta_{ij}^A+\varepsilon_{ij}^A>1$.

\paragraph*{Remark} Although the boundedness of uniform distribution plays an important role in the analysis, more general distributions work, as long as with constant probability $\eta_{ij}^A+\varepsilon_{ij}^A$ is larger than the maximum of $n$ i.i.d.\ $\eta_{ik}^A$s.

\section{Conclusion}
The extensive body of work on two-sided matching has given rise to a comprehensive theory and efficient implementations in real-world scenarios. The majority of these studies assume that agents are fully aware of their preferences. This leaves a critical aspect—the interaction period where agents learn about their preferences via mutual interaction—largely under-explored. The current paper ventures into this area, using an algorithmic lens to extend Gale-Shapley's deferred acceptance to this setting.

We present an algorithmic approach that extends Gale-Shapley's deferred acceptance to include situations where agents form preferences during the interaction period. We propose two algorithms. The first, an adaptive algorithm, integrates interviews between applicants and positions into Gale-Shapley's deferred acceptance. Like deferred acceptance, one side sequentially proposes to the other. However, the order of proposals is arranged to increase the chances of achieving an interim stable match. Further, our analysis shows that the number of interviews conducted by each applicant or position is, with high probability, limited to $O(\log^2 n)$.

We also propose a non-adaptive algorithm for markets where the dynamics consist of an initial interview phase followed by a clearing stage. In these situations, having prearranged interview lists for each position may be necessary. Our non-adaptive algorithm creates these lists simultaneously. It aims to find a stable match in markets where applicants and positions are divided into tiers, and it limits the number of interviews for each applicant or position to no more than $O(\log^3 n)$.

We  point out that the algorithms in this work do not account for the incentives of applicants or positions, i.e.,  mechanisms induced by the algorithm don't carry the dominant strategy incentive compatible (DSIC) property as the Gale-Shapley's algorithm. 




\section*{Acknowledgment} 

The authors acknowledge the support of NSF awards 2209520, 2312156, and a gift from CISCO.

\bibliographystyle{plainnat}
\bibliography{references}

\begin{thebibliography}{33}
\providecommand{\natexlab}[1]{#1}
\providecommand{\url}[1]{\texttt{#1}}
\expandafter\ifx\csname urlstyle\endcsname\relax
  \providecommand{\doi}[1]{doi: #1}\else
  \providecommand{\doi}{doi: \begingroup \urlstyle{rm}\Url}\fi

\bibitem[Agarwal and Cole(2022)]{agarwal2022stable}
Ishan Agarwal and Richard Cole.
\newblock Stable matching: Choosing which proposals to make.
\newblock \emph{arXiv preprint arXiv:2204.04162}, 2022.

\bibitem[Allman and Ashlagi(2023)]{allman2023interviewing}
Maxwell Allman and Itai Ashlagi.
\newblock Interviewing matching in random markets, 2023.

\bibitem[Allman et~al.(2024)Allman, Ashlagi, Saberi, and
  Yu]{Ashlagi2024interview}
Maxwell Allman, Itai Ashlagi, Amin Saberi, and Sophie~H. Yu.
\newblock Stable matching with interviews.
\newblock 2024.

\bibitem[Ashlagi et~al.(2017)Ashlagi, Kanoria, and
  Leshno]{ashlagi2017unbalanced}
Itai Ashlagi, Yash Kanoria, and Jacob~D Leshno.
\newblock Unbalanced random matching markets: The stark effect of competition.
\newblock \emph{Journal of Political Economy}, 125\penalty0 (1):\penalty0
  69--98, 2017.

\bibitem[Ashlagi et~al.(2020)Ashlagi, Braverman, Kanoria, and
  Shi]{ashlagi2020clearing}
Itai Ashlagi, Mark Braverman, Yash Kanoria, and Peng Shi.
\newblock Clearing matching markets efficiently: informative signals and match
  recommendations.
\newblock \emph{Management Science}, 66\penalty0 (5):\penalty0 2163--2193,
  2020.

\bibitem[Beyhaghi and Tardos(2021)]{beyhaghi2021randomness}
Hedyeh Beyhaghi and {\'E}va Tardos.
\newblock Randomness and fairness in two-sided matching with limited
  interviews.
\newblock In \emph{12th Innovations in Theoretical Computer Science Conference
  (ITCS 2021)}. Schloss Dagstuhl-Leibniz-Zentrum f{\"u}r Informatik, 2021.

\bibitem[Cai and Thomas()]{CaiThomas22}
Linda Cai and Clayton Thomas.
\newblock \emph{The Short-Side Advantage in Random Matching Markets}, pages
  257--267.
\newblock \doi{10.1137/1.9781611977066.19}.
\newblock URL \url{https://epubs.siam.org/doi/abs/10.1137/1.9781611977066.19}.

\bibitem[Carmody et~al.(2021)Carmody, Rosman, and
  Carlson]{carmody2021application}
J~Bryan Carmody, Ilana~S Rosman, and John~C Carlson.
\newblock Application fever: reviewing the causes, costs, and cures for
  residency application inflation.
\newblock \emph{Cureus}, 13\penalty0 (3), 2021.

\bibitem[Coles et~al.(2010)Coles, Cawley, Levine, Niederle, Roth, and
  Siegfried]{coles2010job}
Peter Coles, John Cawley, Phillip~B Levine, Muriel Niederle, Alvin~E Roth, and
  John~J Siegfried.
\newblock The job market for new economists: A market design perspective.
\newblock \emph{Journal of Economic Perspectives}, 24\penalty0 (4):\penalty0
  187--206, 2010.

\bibitem[Coles et~al.(2013)Coles, Kushnir, and Niederle]{coles2013preference}
Peter Coles, Alexey Kushnir, and Muriel Niederle.
\newblock Preference signaling in matching markets.
\newblock \emph{American Economic Journal: Microeconomics}, 5\penalty0
  (2):\penalty0 99--134, 2013.

\bibitem[Drummond and Boutilier(2013)]{drummond2013elicitation}
Joanna Drummond and Craig Boutilier.
\newblock Elicitation and approximately stable matching with partial
  preferences.
\newblock In \emph{IJCAI}, pages 97--105. Citeseer, 2013.

\bibitem[Drummond and Boutilier(2014)]{drummond2014preference}
Joanna Drummond and Craig Boutilier.
\newblock Preference elicitation and interview minimization in stable
  matchings.
\newblock In \emph{Proceedings of the AAAI Conference on Artificial
  Intelligence}, volume~28, 2014.

\bibitem[Echenique et~al.(2022)Echenique, Gonzalez, Wilson, and
  Yariv]{echenique2022top}
Federico Echenique, Ruy Gonzalez, Alistair~J Wilson, and Leeat Yariv.
\newblock Top of the batch: Interviews and the match.
\newblock \emph{American Economic Review: Insights}, 4\penalty0 (2):\penalty0
  223--238, 2022.

\bibitem[Gale and Shapley(1962)]{gale1962college}
David Gale and Lloyd~S Shapley.
\newblock College admissions and the stability of marriage.
\newblock \emph{The American Mathematical Monthly}, 69\penalty0 (1):\penalty0
  9--15, 1962.

\bibitem[Gr{\o}ntvedt et~al.(2020)Gr{\o}ntvedt, Bendixen, Botnen, and
  Kennair]{grontvedt2020hook}
Trond~Viggo Gr{\o}ntvedt, Mons Bendixen, Ernst~O Botnen, and Leif
  Edward~Ottesen Kennair.
\newblock Hook, line and sinker: Do tinder matches and meet ups lead to
  one-night stands?
\newblock \emph{Evolutionary Psychological Science}, 6\penalty0 (2):\penalty0
  109--118, 2020.

\bibitem[Jagadeesan and Wei(2018)]{jagadeesan2018varying}
Meena Jagadeesan and Alexander Wei.
\newblock Varying the number of signals in matching markets.
\newblock In \emph{Web and Internet Economics: 14th International Conference,
  WINE 2018, Oxford, UK, December 15--17, 2018, Proceedings 14}, pages
  232--245. Springer, 2018.

\bibitem[Kadam(2021)]{kadam2021interviewing}
Sangram~V Kadam.
\newblock Interviewing in matching markets with virtual interviews.
\newblock 2021.

\bibitem[Lee and Schwarz(2009)]{leeandschwarz}
Robin~S Lee and Michael Schwarz.
\newblock Interviewing in two-sided matching markets.
\newblock Working Paper 14922, National Bureau of Economic Research, April
  2009.
\newblock URL \url{http://www.nber.org/papers/w14922}.

\bibitem[Lee and Niederle(2015)]{lee2015propose}
Soohyung Lee and Muriel Niederle.
\newblock Propose with a rose? signaling in internet dating markets.
\newblock \emph{Experimental Economics}, 18:\penalty0 731--755, 2015.

\bibitem[LeFebvre(2018)]{lefebvre2018swiping}
Leah~E LeFebvre.
\newblock Swiping me off my feet: Explicating relationship initiation on
  tinder.
\newblock \emph{Journal of Social and Personal Relationships}, 35\penalty0
  (9):\penalty0 1205--1229, 2018.

\bibitem[Liu(2020)]{liu2020stability}
Qingmin Liu.
\newblock Stability and bayesian consistency in two-sided markets.
\newblock \emph{American Economic Review}, 110\penalty0 (8):\penalty0
  2625--2666, 2020.

\bibitem[Liu et~al.(2014)Liu, Mailath, Postlewaite, and
  Samuelson]{liu2014stable}
Qingmin Liu, George~J Mailath, Andrew Postlewaite, and Larry Samuelson.
\newblock Stable matching with incomplete information.
\newblock \emph{Econometrica}, 82\penalty0 (2):\penalty0 541--587, 2014.

\bibitem[Manjunath and Morrill(2021)]{manjunath2021interview}
Vikram Manjunath and Thayer Morrill.
\newblock Interview hoarding, 2021.

\bibitem[Melcher et~al.(2019{\natexlab{a}})Melcher, Ashlagi, and
  Wapnir]{melcher2019reducing}
Marc~L Melcher, Itai Ashlagi, and Irene Wapnir.
\newblock Reducing the burden of fellowship interviews—reply.
\newblock \emph{JAMA}, 321\penalty0 (11):\penalty0 1107--1107,
  2019{\natexlab{a}}.

\bibitem[Melcher et~al.(2019{\natexlab{b}})Melcher, Wapnir, and
  Ashlagi]{melcher2019may}
Marc~L Melcher, Irene Wapnir, and Itai Ashlagi.
\newblock May the interview be with you: signal your preferences.
\newblock \emph{Journal of Graduate Medical Education}, 11\penalty0
  (1):\penalty0 39--40, 2019{\natexlab{b}}.

\bibitem[Morgan et~al.(2021)Morgan, Winkel, Standiford, Mu{\~n}oz, Strand,
  Marzano, Ogburn, Major, Cox, and Hammoud]{morgan2021case}
Helen~Kang Morgan, Abigail~F Winkel, Taylor Standiford, Rodrigo Mu{\~n}oz,
  Eric~A Strand, David~A Marzano, Tony Ogburn, Carol~A Major, Susan Cox, and
  Maya~M Hammoud.
\newblock The case for capping residency interviews.
\newblock \emph{Journal of surgical education}, 78\penalty0 (3):\penalty0
  755--762, 2021.

\bibitem[Pittel(1992)]{pittel1992likely}
Boris Pittel.
\newblock On likely solutions of a stable marriage problem.
\newblock \emph{The Annals of Applied Probability}, 2\penalty0 (2):\penalty0
  358--401, 1992.

\bibitem[Pletcher et~al.(2022)Pletcher, Chang, Thorne, and
  Malekzadeh]{pletcher2022otolaryngology}
Steven~D Pletcher, CW~David Chang, Marc~C Thorne, and Sonya Malekzadeh.
\newblock The otolaryngology residency program preference signaling experience.
\newblock \emph{Academic Medicine}, 97\penalty0 (5):\penalty0 664, 2022.

\bibitem[Rastegari et~al.(2013)Rastegari, Condon, Immorlica, and
  Leyton-Brown]{rastegari2013two}
Baharak Rastegari, Anne Condon, Nicole Immorlica, and Kevin Leyton-Brown.
\newblock Two-sided matching with partial information.
\newblock In \emph{Proceedings of the fourteenth ACM conference on Electronic
  Commerce}, pages 733--750, 2013.

\bibitem[Salehi et~al.(2019)Salehi, Azizzadeh, and Lee]{salehi2019preference}
Parsa Salehi, Babak Azizzadeh, and Yan Lee.
\newblock Preference signaling for competitive residency programs in the nrmp,
  12 2019.

\bibitem[Secrest et~al.(2021)Secrest, Coman, Swink, and
  Duffy]{secrest2021limiting}
Aaron~M Secrest, Garrett~C Coman, J~Michael Swink, and Keith~L Duffy.
\newblock Limiting residency applications to dermatology benefits nearly
  everyone.
\newblock \emph{The Journal of Clinical and Aesthetic Dermatology}, 14\penalty0
  (7):\penalty0 30, 2021.

\bibitem[Skancke(2021)]{skancke2021welfare}
Erling Skancke.
\newblock Welfare and strategic externalities in matching markets with
  interviews.
\newblock \emph{Available at SSRN 3960558}, 2021.

\bibitem[Sutton et~al.(2023)Sutton, Wang, Abdelfadeel, Sherman, Cannada, and
  Krueger]{doi:10.1177/15563316221103585}
Ryan Sutton, William~L. Wang, Walaa Abdelfadeel, Matthew Sherman, Lisa~K.
  Cannada, and Chad~A. Krueger.
\newblock Are orthopedic fellowship programs giving out too many interviews? a
  retrospective analysis suggests they are.
\newblock \emph{HSS Journal{\textregistered}}, 19\penalty0 (2):\penalty0
  210--216, 2023.
\newblock \doi{10.1177/15563316221103585}.
\newblock URL \url{https://doi.org/10.1177/15563316221103585}.

\end{thebibliography}

\appendix

\vspace{-0.3em}

\section{Deferred Proofs}

\positiveCounts*
\begin{proof}
    Denote $X_{ij}$ as the indicator random variable of whether $\epsilon_{ij}^A\geq 0$, set $P_i$ as the set of positions that applicant $a_i$ is interviewing, and $X_i=\sum_{p_j\in P_i}X_{ij}$ as the total number of interviewed positions with $\epsilon_{ij}^A\geq 0$. By definition, we know that $X_{ij}$'s are i.i.d.\ Bernoulli random variables with mean $1/2$. By Chernoff bound, we have
    \begin{align*}
        \Pr\left[X_i\leq \frac{\delta}{3}\right]\leq \Pr\left[X_i\leq \left(1-\frac{1}{3}\right)\frac{\delta}{2}\right]\leq \exp\left(-\frac{\delta}{36}\right),
    \end{align*}
    which finishes the proof.
\end{proof}

\end{document}